\documentclass[10pt]{amsart}

\usepackage{amssymb,amsthm,amsmath}
\usepackage[numbers,sort&compress]{natbib}
\usepackage{color}
\usepackage{graphicx}
\usepackage{tikz}
\newcommand{\eps}{\varepsilon}
\newcommand{\EE}{{\mathbb E}}

\hoffset -3.5pc

\title[ ]{ Exact dynamical decay rate for the almost Mathieu operator }
\author{ Svetlana Jitomirskaya}
\address[ Svetlana Jitomirskaya]{ Department of Mathematics, University of California, Irvine, California 92697-3875, USA}
\email{szhitomi@math.uci.edu}
\author{Helge Kr\"uger}
\address[Helge Kr\"uger]{Mathematics 253-37, Caltech, Pasadena, CA 91125}
\email{helge@caltech.edu}

\author{Wencai Liu}
\address[Wencai Liu]{Department of Mathematics, University of California, Irvine, California 92697-3875, USA}\email{liuwencai1226@gmail.com}


\theoremstyle{plain}
\newtheorem{theorem}{Theorem}[section]
\newtheorem{corollary}[theorem]{Corollary}
\newtheorem{lemma}[theorem]{Lemma}
\newtheorem{proposition}[theorem]{Proposition}

\newcommand{\R}{\mathbb{R}}
\newcommand{\Z}{\mathbb{Z}}
\newcommand{\T}{\mathbb{T}}
\theoremstyle{definition}

\begin{document}


\begin{abstract}
 We prove that the exponential decay rate in expectation is well defined and
 is equal to the Lyapunov exponent, for  supercritical almost Mathieu operators
 with Diophantine frequencies.

\end{abstract}
\maketitle

\section{Introduction}
In physics literature, Lyapunov exponent is often referred to as
the inverse localization length, and its positivity is often considered a
manifestation of localization in a 1D system. At the same time,  various physically desirable conclusions, such as e.g. the
exponential decay of the two-point function at the ground state and
positive temperatures with correlation length staying uniformly
bounded as temperature goes to zero, are often
implicitly assumed as attributes of localization. A way
to derive them currently requires a strong form of dynamical
localization \cite{ag}: the exponential (in space) rate of decay of
the two point function,
that is
\begin{equation} \label{edl}
   \EE\sum_s |\varphi_s(\ell)||\varphi_s (k)|
    \leq C e^{-\gamma |k - \ell|}
\end{equation}
where $\{\varphi_s\}_s$
is a complete set of orthonormalized eigenfunctions (and the sum may
be localized in energy, if needed).

In view of this, the {\it exponential decay rate in expectation} was
defined  in \cite{jk} as
\begin{equation}\label{gamma1}
 \gamma_+ := \limsup_{k\to\infty}\left( -\frac{\ln \EE (\sum_{s} |\varphi_s(0)|\cdot |\varphi_s (k)|)}{|k|}\right),
\end{equation}
and
\begin{equation}\label{gamma2}
 \gamma_-:= \liminf_{k\to\infty}\left( -\frac{\ln \EE (\sum_{s} |\varphi_s(0)|\cdot |\varphi_s (k)|)}{|k|}\right).
\end{equation}
 It is obviously
connected to the minimal inverse correlation length. This
definition can be localized  to an energy range by summing over the
eigenfunctions with energies falling in the range, in which case it is
linked to the minimal inverse correlation length for Fermi energies
falling in that range.

It is well known that there is a long road from positive Lyapunov exponents
to a statement like (\ref{edl}). First, positive Lyapunov exponents
don't even imply pure point spectrum for a.e. phase \cite{as}. Even
for models with positive Lyapunov exponents and known pure point spectrum, dynamical localization
may not hold \cite{rjls95}, and then an averaged statement (dubbed
strong dynamical localization) is
strictly stronger, and a result such as
\eqref{edl} is stronger yet (albeit equivalent in all known
examples so far).

Yet it may be natural to expect that there is a certain reason to
physicists' jump in conclusions, and that for physically relevant
models Lyapunov exponent is indeed related to $\gamma_\pm$.

In this paper we prove the first such result. It turns out that for
almost Mathieu operators,  that is operators on $\ell^2(\Z)$ given by
(\ref{Def.AMO}) with potential (\ref{G.potential}), arguably the most popular 1D model in
physics, the Lyapunov exponent precisely defines the dynamical decay
rate.

 Suppose $|\lambda|>1$. Let $L:=\ln|\lambda|$ be the Lyapunov exponent of the almost Mathieu
operator for energies in the spectrum \cite{bj}. We have
\begin{theorem}\label{Maintheorem}
 Let $|\lambda|>1$, and $\alpha$ be Diophantine.
 Then
 \begin{equation}
\gamma_+=\gamma_-=  L.
 \end{equation}

\end{theorem}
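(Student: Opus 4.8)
The plan is to establish the two one‑sided bounds $\gamma_+\le L$ and $\gamma_-\ge L$; since $\gamma_-\le\gamma_+$ always holds, this gives $\gamma_+=\gamma_-=L$. Throughout, $\EE$ denotes the average over $\theta\in\T$. By Jitomirskaya's localization theorem (available because $|\lambda|>1$ and $\alpha$ is Diophantine), for a.e.\ $\theta$ the operator $H_{\lambda,\alpha,\theta}$ has a complete orthonormal family of eigenfunctions $\varphi_s$ that are exponentially localized at rate arbitrarily close to $L$ about centers $n_s$; the exceptional measure‑zero set of $\theta$ does not affect $\EE$. It suffices to treat $k\to+\infty$, the case $k\to-\infty$ being identical after reflection.

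For the softer inequality $\gamma_+\le L$, I would first record that localized eigenfunctions cannot decay \emph{faster} than the Lyapunov rate. As the cocycle is driven by the uniquely ergodic rotation, $\tfrac1n\ln\|A_{[1,n]}(E,\theta)\|\to L$ uniformly in $\theta$ and in $E$ on the spectrum, so $\|A_{[1,n]}(E,\theta)\|\le e^{(L+\eps)n}$ for all large $n$. Pairing $\varphi_s$ with the solution $\psi$ normalized by $\psi(0)=0$, $\psi(1)=1$ and using constancy of the Wronskian $W(\varphi_s,\psi)=-\varphi_s(0)$ gives $\max(|\varphi_s(k)|,|\varphi_s(k+1)|)\ge\tfrac12|\varphi_s(0)|\,e^{-(L+\eps)k}$ for large $k$. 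On a positive‑measure set of $\theta$ on which the localization constants are uniform, I would take the eigenfunction $\varphi_{s_0}$ centered within a bounded distance of the origin with $|\varphi_{s_0}(0)|$ bounded below, so that $\sum_s|\varphi_s(0)||\varphi_s(k)|\ge|\varphi_{s_0}(0)||\varphi_{s_0}(k)|$; combining this with the reverse decay estimate and averaging in $\theta$ yields $\EE\bigl(\sum_s|\varphi_s(0)||\varphi_s(k)|\bigr)\gtrsim e^{-(L+\eps)k}$, hence $\gamma_+\le L+\eps$. The one point requiring care is that the amplitude $\varphi_{s_0(\theta)}(k)$ at the fixed far site $k$ is not atypically small for a positive fraction of $\theta$; this is a non‑concentration statement for $\theta\mapsto\varphi_{s_0(\theta)}(k)$ near zero (uniform in $k$), which one extracts from the analytic dependence of the eigendata on $\theta$. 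Letting $\eps\to0$ gives $\gamma_+\le L$.

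The inequality $\gamma_-\ge L$ is the main work, and rests on a \emph{quantitative} form of localization encoding the hierarchical bump structure of the eigenfunctions: for every $\eps>0$ and every $\theta$,
\[
  |\varphi_s(n)|\le C_\eps\,e^{-(L-\eps)|n-n_s|}\,R_s(n;\theta),\qquad
  R_s(n;\theta)\le\prod_{m}\min\!\bigl(\|2\theta+(n_s+m)\alpha\|^{-1},\;e^{(L-\eps)|m-n_s|}\bigr),
\]
the product running over the (few) relevant resonant scales $m$ between $n_s$ and $n$, each factor capped so that $R_s$ never forces $|\varphi_s|$ above its trivial bound. Granting this, I would write $\sum_s|\varphi_s(0)||\varphi_s(k)|\le C_\eps^2\sum_s e^{-(L-\eps)(|n_s|+|k-n_s|)}R_s(0;\theta)R_s(k;\theta)$; using $|n_s|+|k-n_s|\ge|k|$, bounding the number of centers per unit interval by an absolute constant (a standard consequence of orthonormality and localization), and isolating the $O(|k|)$ centers with $n_s\in[0,k]$, the sum is $\le C_\eps|k|\,e^{-(L-\eps)|k|}$ times a supremum of products of resonance factors. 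It then remains to integrate in $\theta$, and here the Diophantine condition on $\alpha$ enters essentially: $\|2\theta+m\alpha\|$ and $\|2\theta+(m+k)\alpha\|$ cannot both be below $c|k|^{-\tau}$ (their difference is $\|k\alpha\|\ge c|k|^{-\tau}$), so a large resonance factor at site $0$ forces the one at site $k$ to be $\le C|k|^\tau$ and vice versa; combined with $\EE\min(\|2\theta+j\alpha\|^{-1},e^{(L-\eps)d})\le Cd$ and the shallowness of the hierarchy below scale $|k|$ (again Diophantine), this gives $\EE\bigl(R_s(0;\theta)R_s(k;\theta)\bigr)\le C|k|^{\tau+1}$ uniformly in the center. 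Summing, $\EE\bigl(\sum_s|\varphi_s(0)||\varphi_s(k)|\bigr)\le C_\eps\,|k|^{\tau+3}\,e^{-(L-\eps)|k|}\le e^{-(L-2\eps)|k|}$ for $k$ large, so $\gamma_-\ge L-2\eps$; letting $\eps\to0$ gives $\gamma_-\ge L$.

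I expect the main obstacle to be precisely the quantitative localization statement above: a refinement of Jitomirskaya's localization theorem in which the dependence on the resonances of $\theta$ is tracked explicitly, so that on the small set of resonant phases the eigenfunction bounds degrade only polynomially in the scale — not uncontrollably — and the resulting corrections are integrable in $\theta$ against the Diophantine structure of $\alpha$. Once this (and the reverse‑decay and non‑concentration facts on the other side) are available, the remaining steps — the geometric summation over bump locations and the elementary resonance integrals — are routine.
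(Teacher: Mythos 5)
Your lower-bound sketch ($\gamma_-\geq L$) is in the same spirit as the paper: a localization estimate in which the decay rate degrades only through explicitly tracked resonances of $\theta$, followed by integration in $\theta$ using the Diophantine condition and the smallness of the resonant sets (the paper does this via Theorem \ref{Keytheorem}, the sets $A_{\eta;n}$, $B_{\eta;n;\ell}$ and a layer-cake argument). You correctly identify that the whole weight rests on the quantitative, resonance-tracked localization statement, which you assume rather than prove; the paper derives exactly such statements from \cite{jitomirskaya2018universal} in the appendix. So that half is an incomplete but essentially correctly aimed outline.

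The genuine gap is in your argument for $\gamma_+\leq L$. The Wronskian/transfer-matrix reverse estimate only bounds $\max(|\varphi_{s}(k)|,|\varphi_{s}(k+1)|)$ from below; since $\gamma_+$ is a $\limsup$, you need $\EE\sum_s|\varphi_s(0)||\varphi_s(k)|\geq e^{-(L+\eps)k}$ at \emph{every} large $k$, i.e.\ a single-site lower bound at the prescribed site $k$, on a set of $\theta$ of measure bounded below. You acknowledge this and propose to close it by a ``non-concentration statement\dots extracted from the analytic dependence of the eigendata on $\theta$'', but no such dependence is available: in the localization regime the eigenvalues and eigenfunctions of $H_\theta$ are only measurably defined in $\theta$ (localization itself holds only for a.e.\ $\theta$, the labeling by $s$ is not canonical, and there is no continuity, let alone analyticity), and the sharp eigenfunction asymptotics of \cite{jl1,jitomirskaya2018universal} control only two-site quantities such as $(|\varphi(k)|^2+|\varphi(k+1)|^2)^{1/2}$ precisely because values at individual sites can be anomalously small. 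This is not a technicality: the paper stresses that the upper bound on $\gamma_+$ is the hard, ``delocalization'' half, and its mechanism is entirely different from yours. It works on the resonant phase set $\Theta$ with $e^{-2\Gamma|n|}\leq|\sin\pi(2\theta+n\alpha)|\leq e^{-\Gamma|n|}$ (measure $\gtrsim e^{-\Gamma|n|}$, with no competing resonances), where a palindromic/Wronskian argument (Lemma \ref{JS}) forces $|\phi_s(n)-\phi_s(0)|\leq e^{-\frac12(\Gamma-L-\eps)|n|}\|\phi_s\|_{\ell^\infty}$ for \emph{all} eigenfunctions simultaneously; combined with $\sum_{|m|\leq C_\star|n|}\sum_{n_s=m}|\phi_s(0)|^2\geq\frac12$ this gives $\sum_s|\phi_s(0)\phi_s(n)|\geq\frac14$ on $\Theta$, hence $\EE\geq\frac{1}{400}e^{-\Gamma|n|}$ for every large $n$, and $\Gamma\downarrow L$ yields $\gamma_+\leq L$. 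Without either this resonance mechanism or a proved single-site non-concentration estimate for typical phases, your upper-bound argument does not go through.
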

Without of loss of generality, we assume $\lambda>0$.
We note that almost Mathieu operators have Anderson localization with
eigenfunctions decaying exactly at the Lyapunov rate {\it if and only if}
$\lambda>1$, and $\alpha$ is Diophantine \cite{jl1}, thus we establish
equality of the exponential decay rate in expectation and the
Lyapunov exponent throughout this entire regime\footnote{ More
  precisely, exact Lyapunov decay of the eigenfunctions holds if and
  only if $\lambda>1$, and $\limsup \frac{\ln q_{n+1}}{q_n}=0,$ where
  $q_n$ are denominators of continued fraction approximants of
  $\alpha$ \cite{jl1}. Our result depends on Lemmas from
  \cite{jitomirskaya2018universal} that were formulated there for the
  standard Diophantine condition, but our proof would hold for the entire regime
  $\limsup \frac{\ln q_{n+1}}{q_n}=0$  if those lemmas were
  correspondingly upgraded, which is a technical matter.}.

Previous quantum dynamics results in the regime of localization have
been limited to lower bounds for related quantities, for any model. Bounds
for the supercritical (that is $\lambda>1$)  almost Mathieu operator go back to
\cite{jl98,gj01}. Dynamical localization for general analytic
quasiperiodic potentials was obtained in \cite{bj00}.

A lower bound  on $\gamma_-$, establishing its positivity, was proved, under
the same assumptions as in Theorem \ref{Maintheorem}, in \cite{jk}. Previously, lower bounds on $\gamma_-$ were obtained
for the {\em Anderson model},
i.e. for the potential being independent identically
distributed random variables, in \cite{dks83, ks80}
for the one-dimensional case and in \cite{a94, asfh01}
for higher dimensions throughout the regimes where corresponding
proofs of localization work, thus excluding e.g. Bernoulli.  The corresponding
result for continuum operators was proven in \cite{aenss06}. Recently, a proof of such lower bound was obtained for an
{\it arbitrary} 1D bounded Anderson model in \cite{gz} using a more delicate
implementation of the method of \cite{jz} and some ideas of \cite{jk}.

While lower bounds on $\gamma_-$ are a corollary of localization, that
is of taming the resonances,  upper bounds on  $\gamma_+$ are a
corollary of delocalization, that is of exploiting the presence of the
resonances. It is well known that the latter task is usually
harder. In this paper we achieve this, at the same time making both
estimates sharp. Our analysis uses (a small part of the) delicate
estimates on eigenfunctions obtained in
\cite{jitomirskaya2018universal}. The statements we need that are
similar to those in \cite{jitomirskaya2018universal} are
presented in the appendix, while the body of the paper consists of the
new argument needed to derive the sharp upper and lower bounds.

It is tempting to conjecture that Theorem \ref{Maintheorem} has a
universal nature, but one should be cautious. For example, we do not
expect it to hold even for weakly Liouville almost Mathieu operators
for which localization has been established in \cite{jl1}, with
eigenfunctions decaying exponentially but at a {\it non-Lyapunov}
rate \cite{jl1}. However, even for those a
statement of the form $\gamma_+=L$ may be plausible. Moreover,
almost Mathieu operators are special in that their Lyapunov exponent
is constant on the spectrum, and without this condition the statement
of the theorem doesn't even make sense. Yet, it is natural to expect that in many
physically relevant situations there should be a link between
$\gamma_\pm$ and $L_\pm$, where $L_+= \sup L(E)$ ($L_-= \inf L(E)$)
over $E$ in the spectrum. For example, it is an interesting question
to establish such a connection for the Anderson model where eigenfunctions
do decay at the Lyapunov rate (e.g. \cite{jz}). In the framework of
the method of \cite{jz,gz} this would require  more delicate estimates
on the probabilities of large deviation sets.


\section{Preliminaries}
For $\lambda > 0$, $\alpha$ irrational, and $\theta\in\R$,
define the potential
\begin{equation}\label{G.potential}
 V_{\lambda,\alpha,\theta} (n) = 2 \lambda \cos2\pi (\theta+n\alpha),
\end{equation}
where $\lambda$ is the coupling, $\alpha $ is the frequency, and $\theta $ is the phase.
We define the almost Mathieu operator by
its action on $u\in\ell^2(\Z)$,
 \begin{equation}\label{Def.AMO}
 (H_{\lambda,\alpha,\theta}u)(n)=u({n+1})+u({n-1})+  V_{\lambda,\alpha,\theta} (n) u(n). 
 \end{equation}
 We say that frequency $\alpha$ is Diophantine if there exist  $\kappa>0$ and $\tau>0$ such
that  for $k\neq 0$,
\begin{equation*}
    ||k\alpha||_{\R/\Z}\geq \frac{\tau}{|k|^{\kappa}},
\end{equation*}
where $||x||_{\R/\Z}=\inf_{\ell\in \Z}|x-\ell|$.

In the following, we will consider $\lambda > 1$ and $\alpha$
Diophantine fixed, and so set $H_{\theta} : = H_{\lambda,\alpha,\theta}$.
We know that for almost every $\theta$, the spectrum
of $H_{\theta}$ is pure point \cite{J99}.
We denote by $\phi_{\theta;s}$ an orthonormal basis consisting of eigenfunctios of $H_{\theta}$.
Let $n_{\theta;s}$ be the position of the leftmost maximum of
$\phi_{\theta;s}$, so
\begin{equation}
 |\phi_{\theta;s}(n_{\theta;s})| = \|\phi_{\theta;s}\|_{\ell^{\infty}(\Z)}.
\end{equation}

A key step in the proof of Theorem \ref{Maintheorem} will be to prove the following localization
result. Below $\eps$ is always small.
\begin{theorem}\label{Keytheorem}
 Let $\lambda>1$, $\alpha$ Diophantine, $\theta\in\R$,
 $\ell\in\Z$, and $\ell^\prime=|\ell-n_{\theta;s}|$.
 Let $x_0\in[-2\ell^{\prime},2\ell^{\prime}]$  be such that
 \begin{equation}
  |\sin\pi(2\theta+\alpha (2n_{\theta;s}+x_0))|=\min_{|x|\leq 2\ell^{\prime}}
  |\sin\pi(2\theta+\alpha (2n_{\theta;s}+x))|.
 \end{equation}
 Then  for large $\ell^{\prime}$ (depending on $\eps$) we have
 \begin{itemize}
   \item if $\ell$ and $x_0$ are on different sides of $n$, that is $(\ell-n)(x_0-n)<0$, then
   \begin{equation}
 |\phi_{\theta;s}(\ell)|\leq
   e^{-(L -\eps)|\ell-n_{\theta;s}|}
   |\phi_{\theta;s}(n_{\theta;s})|.
 \end{equation}

   \item if  $(\ell-n)(x_0-n)\geq0$ and  $
 |\sin\pi(2\theta+\alpha (2n_{\theta;s}+x_0))|\geq
    e^{-\eta |\ell-n_{\theta;s}|}$ for  some $\eta\in(0,L-\eps)$,
 then
   \begin{equation}
 |\phi_{\theta;s}(\ell)|\leq
   e^{-(L -\eps-\eta)|\ell-n_{\theta;s}|}
   |\phi_{\theta;s}(n_{\theta;s})|.
 \end{equation}
 \end{itemize}

\end{theorem}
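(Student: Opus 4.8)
The plan is to establish this localization statement by the usual transfer-matrix / Green's function machinery adapted to the almost Mathieu operator, using the block-resolvent expansion around the maximum site $n = n_{\theta;s}$. Since $\phi = \phi_{\theta;s}$ is an $\ell^2$ eigenfunction of $H_\theta$ with eigenvalue $E$, for any interval $I = [a,b] \ni \ell$ not containing $n$ we have the Poisson-type formula
\begin{equation*}
 \phi(\ell) = -G_I(\ell,a)\,\phi(a-1) - G_I(\ell,b)\,\phi(b+1),
\end{equation*}
where $G_I = (H_{\theta,I} - E)^{-1}$ is the restricted resolvent. The goal is therefore to show that for a suitably chosen interval $I$ of length of order $\ell'$, the Green's function decays at rate $L - \eps$ (respectively $L - \eps - \eta$ in the resonant case) from $\ell$ to the endpoints, while the boundary values $|\phi(a-1)|, |\phi(b+1)|$ are controlled by $\|\phi\|_\infty = |\phi(n)|$ — the latter being trivial since $n$ is the global maximum. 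The main input is the quantitative upper bound on $|G_I(x,y)|$ in terms of $e^{-L|x-y|}$ times a factor measuring how far the interval $I$ is from being "resonant," i.e. a factor controlled by $\prod |\sin\pi(2\theta + \alpha(2n+x))|^{-1}$ type quantities, or rather by the single worst such sine over the interval; this is precisely the kind of estimate supplied by the Lemmas in the appendix (the ``universal hierarchical structure'' estimates of \cite{jitomirskaya2018universal}).

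Concretely, I would proceed as follows. First, fix $\eps$ small and take $\ell'$ large. Let $n = n_{\theta;s}$ and assume WLOG $\ell > n$, so $\ell' = \ell - n$. The natural interval to use is one of the form $I = [n+1, n + m]$ with $m$ a bit larger than $\ell'$ — but because $n$ may itself sit near a resonance we instead want an interval whose left endpoint is a ``regular'' (non-resonant) point for the cocycle; the phase appearing in the transfer matrix over $[n, n+x]$ involves $2\theta + \alpha(2n + x)$, which is why the hypothesis is phrased in terms of the minimizing $x_0 \in [-2\ell', 2\ell']$ of $|\sin\pi(2\theta + \alpha(2n+x))|$. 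Second, apply the block resolvent / telescoping argument: cover $I$ by blocks of regular points, on each of which the Green's function obeys the clean exponential bound at rate close to $L$ (this uses positivity of the Lyapunov exponent and the uniform upper semicontinuity / large-deviation estimate for $\frac1k \ln\|A_k\|$, which for the almost Mathieu operator is $L + o(1)$ since $L$ is constant on the spectrum), and patch these together by the resolvent identity, paying a subexponential price for each patch as long as the resonant site $x_0$ — the one place where a block might fail to be regular — is either on the far side of $\ell$ from $n$ (Case 1, so it never enters the relevant interval between $n$ and $\ell$) or is only mildly resonant, $|\sin\pi(2\theta+\alpha(2n+x_0))| \geq e^{-\eta\ell'}$ (Case 2, costing at most a factor $e^{\eta\ell'}$). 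Third, combine with the Poisson formula and $|\phi(\text{boundary})| \le |\phi(n)|$ to conclude.

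I would organize the two cases exactly as in the statement. In Case 1, $(\ell-n)(x_0-n) < 0$ means the unique possibly-bad site lies on the opposite side of $n$, so on the entire stretch from $n$ to $\ell$ every block is regular and the Green's function over $[n+1,\ell+r]$ (for small $r$) decays at full rate $L - \eps$; the Poisson formula then gives $|\phi(\ell)| \le e^{-(L-\eps)\ell'}|\phi(n)|$ directly. In Case 2, $x_0$ lies between $n$ and $\ell$, so one block straddling $x_0$ is only ``$\eta$-bad''; the telescoping estimate accumulates decay $(L-\eps)$ over a stretch of length $\ell'$ but loses the factor $e^{\eta\ell'}$ coming from inverting the near-singular sine there, yielding the rate $L - \eps - \eta$. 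The bookkeeping that $x_0$ is the \emph{unique} problematic site on $[-2\ell', 2\ell']$ — i.e. that the second-worst sine over that range is already $\ge e^{-\eps\ell'/10}$ or so — is where the Diophantine condition on $\alpha$ enters: two near-resonances $\|2\theta + \alpha(2n+x)\| $ and $\|2\theta + \alpha(2n+x')\|$ both tiny would force $\|\alpha(x-x')\|$ tiny, contradicting Diophantine-ness for $|x-x'| \lesssim \ell'$ once $\ell'$ is large.

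The main obstacle, and the part that genuinely uses the deep input from \cite{jitomirskaya2018universal}, is the \emph{sharpness} of the rate: getting $L - \eps$ rather than, say, $L/2$ or $cL$. A naive block-resolvent argument with blocks of bounded size only yields a positive but non-explicit rate, because each patching via the resolvent identity costs a multiplicative constant and these constants compound. To recover the full Lyapunov rate one needs the precise exponential behavior of the transfer matrices together with the hierarchical structure of the eigenfunction — the fact that between consecutive resonances the eigenfunction really does decay at rate $L + o(1)$ and the resonances on $[-2\ell',2\ell']$ reduce (by Diophantine-ness) to the single site $x_0$. So the crux is to invoke the appendix lemmas to upgrade the rough block estimate to one where the only loss over the whole interval of length $\ell'$ is $o(\ell')$ (the $\eps$) plus, in the resonant case, the controlled $\eta\ell'$ from the one bad sine. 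Everything else — the Poisson formula, $|\phi|\le|\phi(n)|$ at the boundary, and the splitting into the two cases — is routine.
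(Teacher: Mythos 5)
Your overall skeleton---reduce to decay from the localization center $n_{\theta;s}$, use the Diophantine condition to make the near-resonant site $x_0$ essentially unique in $[-2\ell',2\ell']$, split according to which side of $n_{\theta;s}$ it lies on, use $|\phi_{\theta;s}(\cdot)|\le|\phi_{\theta;s}(n_{\theta;s})|$ at reference points, and defer the sharp-rate input to the estimates imported from \cite{jitomirskaya2018universal}---is the same as the paper's, and your Case 1 is essentially the paper's Case I: there the bound is a direct application of Lemma \ref{Keylemmaapp} (an eigenfunction estimate between the reference points $0$, $x_0$, $2\ell$, valid because the resonant point lies outside the relevant interval) combined with the transfer-matrix bound \eqref{G.new18app}.

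In the resonant case, however, there is a genuine gap in your mechanism. You propose a block-resolvent patching in which the single block ``straddling $x_0$'' costs a factor $e^{\eta\ell'}$ ``from inverting the near-singular sine.'' That step is not justified and it is not what the appendix lemmas supply: the smallness of $|\sin\pi(2\theta+\alpha(2n_{\theta;s}+x_0))|$ does not enter any resolvent identity as a $1/|\sin|$ factor attached to one bad block. It creates an approximate reflection symmetry about $n_{\theta;s}+x_0/2$, so the untrustworthy boxes are all those whose endpoints sum to approximately $2n_{\theta;s}+x_0$---boxes that may be far from $x_0$ itself---and no Green's function bound with a mere $e^{\eta\ell'}$ loss is available for them by patching. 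The paper's actual route is different: first bound $\phi(\ell)$ in terms of $\phi(n_{\theta;s})$ and $\|U^{\phi}(x_0)\|$ via Lemma \ref{Keylemmaapp}, then control the value \emph{at the resonant site} by the reflective/palindromic estimate Lemma \ref{Keylemma1app}, which gives (after shifting so $n_{\theta;s}=0$) $\|U^{\phi}(x_0)\|\leq \max\{\|U^{\phi}(0)\|,\|U^{\phi}(2x_0)\|\}e^{-(L-t-\eps)|x_0|}$ when $|\sin\pi(2\theta+\alpha x_0)|=e^{-t|x_0|}$; since $t|x_0|\leq\eta\ell'$, the decay rate \emph{up to} the resonance is degraded by $t$, full-rate decay resumes between $x_0$ and $\ell$, and a separate elementary subcase handles $x_0\leq(\frac{\eta}{L}+\eps)\ell$. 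Your final exponent happens to agree, but without an estimate of the type of Lemma \ref{Keylemma1app} (whose proof is a Wronskian/reflection comparison, not resolvent patching) the argument does not close; note also that your Case 2 tacitly places $x_0$ between $n_{\theta;s}$ and $\ell$, whereas the hypothesis allows $x_0$ anywhere on that side up to $2\ell'$, in particular beyond $\ell$, which the route through $\|U^{\phi}(x_0)\|$ handles uniformly.
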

\begin{proof}
Theorem \ref{Keytheorem} is obtained using the arguments from \cite{jitomirskaya2018universal}. We include a proof in the appendix.
\end{proof}
Theorem \ref{Keytheorem} implies the following corollary immediately.
\begin{corollary}\label{cornov18}
Let $\lambda>1$, $\alpha$ Diophantine, $\theta\in\R$,
 $\ell\in\Z$, and $\ell^\prime=|\ell-n_{\theta;s}|$.
 Let $x_0\in[-2\ell^{\prime},2\ell^{\prime}]$ such that
 \begin{equation}
  |\sin\pi(2\theta+\alpha (2n_{\theta;s}+x_0))|=\min_{|x|\leq 2\ell^{\prime}}
  |\sin\pi(2\theta+\alpha (2n_{\theta;s}+x))|.
 \end{equation}
 Suppose
 for some $\eta\in(0,L-\eps)$
 \begin{equation}
  \min_{|x|\leq 2\ell^{\prime}}
  |\sin\pi(2\theta+\alpha (2n_{\theta;s}+x))|>
    e^{-\eta |\ell-n_{\theta;s}|}.
 \end{equation}
 Then we have
 \begin{equation}
 |\phi_{\theta;s}(\ell)|\leq
   e^{-(L -\eta-\eps)|\ell-n_{\theta;s}|}
   |\phi_{\theta;s}(n_{\theta;s})|.
 \end{equation}
\end{corollary}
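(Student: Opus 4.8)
The plan is to obtain Corollary~\ref{cornov18} directly from Theorem~\ref{Keytheorem} by splitting into the same two cases that appear in that theorem, according to the sign of $(\ell-n_{\theta;s})(x_0-n_{\theta;s})$. Throughout I abbreviate $n=n_{\theta;s}$ and recall that $x_0$ is, by construction, a minimizer of $|\sin\pi(2\theta+\alpha(2n+x))|$ over $|x|\le 2\ell^\prime$, so the hypothesis of the corollary reads $|\sin\pi(2\theta+\alpha(2n+x_0))|>e^{-\eta|\ell-n|}$ for some fixed $\eta\in(0,L-\eps)$. As in Theorem~\ref{Keytheorem}, the conclusion is understood to hold once $\ell^\prime$ is large enough depending on $\eps$.

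First I would treat the case $(\ell-n)(x_0-n)<0$. Here the first displayed bound of Theorem~\ref{Keytheorem} applies directly and yields $|\phi_{\theta;s}(\ell)|\le e^{-(L-\eps)|\ell-n|}|\phi_{\theta;s}(n)|$. Since $\eta>0$ we have $e^{-(L-\eps)|\ell-n|}\le e^{-(L-\eps-\eta)|\ell-n|}$, which is exactly the asserted bound; note that the sine hypothesis is not used in this case.

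Next I would treat the case $(\ell-n)(x_0-n)\ge 0$. Because $x_0$ attains the minimum, the corollary's hypothesis gives $|\sin\pi(2\theta+\alpha(2n+x_0))|=\min_{|x|\le 2\ell^\prime}|\sin\pi(2\theta+\alpha(2n+x))|>e^{-\eta|\ell-n|}$, and in particular $|\sin\pi(2\theta+\alpha(2n+x_0))|\ge e^{-\eta|\ell-n|}$ with $\eta\in(0,L-\eps)$. This is precisely what the second displayed bound of Theorem~\ref{Keytheorem} requires, so it gives $|\phi_{\theta;s}(\ell)|\le e^{-(L-\eps-\eta)|\ell-n|}|\phi_{\theta;s}(n)|$. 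Combining the two cases proves the corollary.

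I do not expect any genuine obstacle: the corollary is essentially a bookkeeping consequence of Theorem~\ref{Keytheorem}. The only points requiring care are matching each case to the correct assertion of the theorem, observing that in the ``opposite sides'' case the factor $e^{\eta|\ell-n|}$ is gained for free from $\eta>0$, and noting that the strict inequality in the hypothesis trivially supplies the non-strict one needed to invoke the theorem. All of the substance sits in Theorem~\ref{Keytheorem} itself, whose proof is deferred to the appendix.
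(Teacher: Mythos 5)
Your proposal is correct and matches the paper, which simply states that Corollary~\ref{cornov18} follows immediately from Theorem~\ref{Keytheorem}; the case split on the sign of $(\ell-n_{\theta;s})(x_0-n_{\theta;s})$, absorbing the extra factor via $\eta>0$ in the first case and invoking the second bullet with the minimality of $x_0$ in the second, is exactly the intended deduction.
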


\section{ The lower bound}
In this part we will prove the lower bound in Theorem
\ref{Maintheorem}: $\gamma_-\geq L.$
That is we will fix
$\ell\in\Z$ and bound
\begin{equation}
 \int_0^{1} \sum_{s} |\phi_{\theta;s}(0) \phi_{\theta;s}(\ell)| d\theta=
  \sum_{n\in\Z} \int_0^{1}
   \sum_{n_{\theta;s}=n} |\phi_{\theta;s}(0) \phi_{\theta;s}(\ell)| d\theta
\end{equation}
from above.
By orthogonality, we have  for any $s$,
 \begin{equation}\label{Gauth1}
  \sum_{n} |\phi_{\theta;s}(n)|^2 = 1,
 \end{equation}
 and for any $\theta\in \Z$
 \begin{equation}\label{Gauth2}
  \sum_{s} |\phi_{\theta;s}(n)|^2 = 1.
 \end{equation}
By symmetry, we can clearly assume that $\ell\geq 0$.
We note that in order to prove the lower bound in Theorem~\ref{Maintheorem},
it suffices to show

\begin{theorem}\label{Maintheoremupper}
 Let $\lambda >1$, $\alpha$ Diophantine, and $0 < \Gamma < L$.
 Then for $\ell \geq 0$ large enough, we have
 \begin{equation}
  \sum_{n\in\Z} \int_0^{1}
   \sum_{n_{\theta;s}=n} |\phi_{\theta;s}(0) \phi_{\theta;s}(\ell)| d\theta
    \leq e^{-\Gamma\ell}.
 \end{equation}
\end{theorem}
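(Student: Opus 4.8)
The plan is to dyadically decompose the $\theta$-integral according to the size of the resonance quantity
$$
\delta_\theta(n) := \min_{|x|\le 2|\ell-n|} |\sin\pi(2\theta + \alpha(2n+x))|,
$$
and to bound each piece using Corollary~\ref{cornov18} together with the $L^2$-normalization \eqref{Gauth1}--\eqref{Gauth2} and the Cauchy--Schwarz inequality. Concretely, fix $\varepsilon$ small with $\Gamma < L - C\varepsilon$, pick $\eta_0 \in (0, L-\varepsilon)$ small, and split the region $\{n_{\theta;s} = n\}$ into the "good" part where $\delta_\theta(n) > e^{-\eta_0|\ell-n|}$ and the "bad" part where $\delta_\theta(n) \le e^{-\eta_0|\ell-n|}$. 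On the good part, Corollary~\ref{cornov18} gives pointwise decay $|\phi_{\theta;s}(\ell)| \le e^{-(L-\eta_0-\varepsilon)|\ell-n|}|\phi_{\theta;s}(n)|$, so using $|\phi_{\theta;s}(0)|\le 1$ and then Cauchy--Schwarz in $s$ against \eqref{Gauth2} (which controls $\sum_{n_{\theta;s}=n}|\phi_{\theta;s}(n)|^2 \le 1$ and also $\sum_s|\phi_{\theta;s}(0)|^2=1$) the good contribution is at most $e^{-(L-\eta_0-\varepsilon)|\ell-n|}$ for each $n$, and summing the geometric series in $n\in\Z$ costs only a polynomial-in-$\ell$ factor (the sum is dominated by $n$ near $\ell$ and near $0$), which is absorbed into $e^{-\Gamma\ell}$.

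For the bad part one cannot use decay at $\ell$, so instead I would trade it against the smallness of the measure of bad $\theta$'s. The key point is that $\{\theta : \delta_\theta(n) \le e^{-\eta_0|\ell-n|}\}$ is a union of $O(|\ell-n|)$ intervals each of length $O(e^{-\eta_0|\ell-n|})$ (one near each zero of $\theta \mapsto \sin\pi(2\theta+\alpha(2n+x))$ as $x$ ranges over $[-2|\ell-n|,2|\ell-n|]$, using the Diophantine condition to keep these zeros separated), so the bad set in $\theta$ has measure $\lesssim |\ell-n|\, e^{-\eta_0|\ell-n|}$. On this set I bound $\sum_{n_{\theta;s}=n}|\phi_{\theta;s}(0)\phi_{\theta;s}(\ell)|$ crudely by $1$ via Cauchy--Schwarz against \eqref{Gauth1}. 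But the remaining obstacle is the sum over $n$: for $n$ far from both $0$ and $\ell$ — in particular for $|n|$ large — the factor $e^{-\eta_0|\ell-n|}$ is large, not small, and $\sum_n |\ell-n| e^{-\eta_0|\ell-n|}$ diverges. This forces a preliminary reduction: one must first show that the eigenfunctions with $n_{\theta;s} = n$ for $|n|$ very large (say $|n| > C\ell$) contribute negligibly, using \eqref{Gauth2} — indeed for fixed $\theta$, $\sum_n \#\{s : n_{\theta;s}=n\}$ relates to the total $\ell^2$ mass, and more to the point, if $n_{\theta;s}=n$ with $|n-\ell|$ and $|n|$ both $\gg \ell$, then by the basic localization (Theorem~\ref{Keytheorem}, first bullet, applied both to the left and right of the maximum) $|\phi_{\theta;s}(0)|$ and $|\phi_{\theta;s}(\ell)|$ are both exponentially small in $|n|$, and summing $|\phi_{\theta;s}(0)|^2$ over such $s$ is $\le 1$, so Cauchy--Schwarz plus the exponential smallness at $0$ handles the tail $|n|\ge C\ell$ with room to spare.

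So the structure is: (i) dispose of $|n_{\theta;s}| \ge C\ell$ via exponential smallness of $|\phi_{\theta;s}(0)|$ and $\ell^2$-normalization; (ii) for $|n| < C\ell$, split into good $\theta$ (measure $\le 1$, pointwise decay $e^{-(L-\eta_0-\varepsilon)|\ell-n|}$) and bad $\theta$ (measure $\lesssim \ell\, e^{-\eta_0 |\ell - n|}$, trivial bound $\le 1$ on the sum); (iii) sum over $|n|<C\ell$, getting $\lesssim \mathrm{poly}(\ell)\,(e^{-(L-\eta_0-\varepsilon)\cdot(\text{gap})} + e^{-\eta_0 \cdot(\text{gap})})$ where the relevant "gap" near the endpoints $0$ and $\ell$ is what produces the decay — and here one must be a little careful that $|\ell - n|$ small (i.e. $n$ near $\ell$) is exactly where the bad bound is strongest, while $n$ near $0$ is where $|\phi_{\theta;s}(0)|$ itself is small, so the two mechanisms cover complementary ranges. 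Choosing $\eta_0$ small enough that $\min(L-\eta_0-\varepsilon,\eta_0)$ — wait, that is not quite it; rather one optimizes so that for every $n$ with $|n|<C\ell$ at least one of $|\ell-n|$ or $|n|$ is $\ge \ell/2$, giving a net bound $\mathrm{poly}(\ell)\, e^{-c\ell}$ with $c = \min(L-\eta_0-\varepsilon, \eta_0)/2 > \Gamma/?$ — the final step is just to check the arithmetic of the exponents closes with $\Gamma < L$. \textbf{The main obstacle} I expect is precisely this bookkeeping of the bad set across all $n$: making sure that wherever the resonance-measure bound $e^{-\eta_0|\ell-n|}$ fails to be small (namely $n$ far from $\ell$) one has an independent source of smallness (either $n$ far from $0$, handled by localization at $0$, or $n$ far from $\ell$ but still $< C\ell$, handled by... the good-set decay being in $|\ell-n|$ which is then large), and that the polynomial losses from counting resonance intervals and from the $n$-sum never overcome the exponential gain.
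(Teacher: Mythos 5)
Your scheme, as written, cannot reach the sharp rate the theorem asserts. The statement requires the bound $e^{-\Gamma\ell}$ for \emph{every} $\Gamma<L$, i.e.\ nearly full Lyapunov decay over the whole distance $\ell$. With a single fixed threshold $\eta_0$, the good set gives at most $e^{-(L-\eta_0-\eps)|\ell-n|}$ while the bad set contributes its measure $\lesssim |\ell-n|e^{-\eta_0|\ell-n|}$ times the trivial bound $1$, so each $n$ contributes roughly $e^{-\min(L-\eta_0-\eps,\;\eta_0)\,|\ell-n|}$ (or the analogous bound in $|n|$). Moreover, for each $n$ you only ever exploit decay at \emph{one} of the two endpoints, relying on $\max(|n|,|\ell-n|)\geq \ell/2$; optimizing $\eta_0$ then yields a net rate of order $L/4$, which is the qualitative positivity of $\gamma_-$ (already known from \cite{jk}) but falls far short of $\Gamma$ close to $L$. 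You sense this yourself when you hesitate over whether "the arithmetic of the exponents closes with $\Gamma<L$" --- it does not. (A small additional slip: it is $n$ \emph{far from} $0$, not near $0$, that makes $|\phi_{\theta;s}(0)|$ small, since $n=n_{\theta;s}$ is the localization center.)

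Two ideas in the paper's proof are exactly what is missing. First, instead of a fixed $\eta_0$, one runs a layer-cake argument in the level $t$: writing $t=e^{\eta(t)n}e^{-(L-\eps)n}$, the superlevel set $\{\sum_{n_{\theta;s}=n}|\phi_{\theta;s}(0)\phi_{\theta;s}(\ell)|>t\}$ is contained in $A_{\eta(t);n}$, whose measure is $\lesssim n\,e^{-(L-\eps)n}/t$; integrating $dt$ then costs only a factor $\sim n$, so each $n$ contributes $e^{-(L-O(\eps))\max(|n|,|\ell-n|)}$ with no loss of $\eta_0$ in the exponent --- this already disposes of $n\geq(1-\delta_0)\ell$ and $n\leq\delta_0\ell$ at the sharp rate. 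Second, for the middle range $\delta_0\ell\leq n\leq(1-\delta_0)\ell$ one must get decay at $0$ \emph{and} at $\ell$ simultaneously, summing to nearly $L\ell$; the key is that by the Diophantine condition the resonant site $x_0(\theta)$ in $[-10\ell,10\ell]$ is \emph{unique}, hence lies on at most one side of the center $n$. On the non-resonant side the first bullet of Theorem~\ref{Keytheorem} gives unimpeded decay $e^{-(L-\eps)\cdot(\text{distance})}$, and on the resonant side the same layer-cake argument over $A_{\eta;n}$ (or $B_{\eta;n;\ell}$) absorbs the loss $e^{\eta|n|}$ against the measure of the resonant set. Your decomposition never combines the two endpoints in this way, and your crude bound $1$ on the bad set discards precisely the structure (one-sidedness of the resonance) that makes the sharp exponent attainable.
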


For $n \in \Z$ and $0 < \eta < L$, we define the sets
\begin{equation}
 A_{\eta;n} = \{\theta:\quad \min_{|n^\prime|\leq 10|n|}
  |\sin\pi(2 \theta + \alpha (2 n + n^\prime))| \leq e^{-\eta |n|}\},
\end{equation}
and
\begin{equation}
 B_{\eta;n;\ell} = \{\theta:\quad \min_{|n^\prime|\leq 10 |n-\ell|}
  |\sin\pi(2 \theta + \alpha (2 n + n^\prime))| \leq e^{-\eta |n - \ell|}\}
\end{equation}
We clearly have that $|A_{\eta,n}|\leq (20|n|+1) e^{-\eta |n|}$
and $|B_{\eta;n;\ell}|\leq (20|n-\ell|+1)e^{-\eta|n-\ell|}$.

By Theorem \ref{Keytheorem} and Corollary \ref{cornov18}, we can obtain the following Lemma.
\begin{lemma}\label{lem:ABeta}
For any $\eta\in(0,L-\eps)$,
the following estimates hold,
 \begin{enumerate}
  \item [(i)]For $\theta\notin A_{\eta;n}$ and  $n_{\theta;s}=n$, we have
   \begin{equation}
   |\phi_{\theta;s}(0)| \leq e^{-(L-\eta-\eps) |n|}
    |\phi_{\theta;s}(n)|,
   \end{equation}
   for large $|n|$.
  \item [(ii)]For $\theta\notin B_{\eta;n;\ell}$ and $n_{\theta;s}=n$, we have
   \begin{equation}
   |\phi_{\theta;s}(\ell)| \leq e^{-(L-\eta-\eps) |n-\ell|}
    |\phi_{\theta;s}(n)|,
   \end{equation}
   for large $|n-\ell|$.
 \end{enumerate}
\end{lemma}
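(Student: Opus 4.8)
The plan is to read off both estimates as immediate specializations of Corollary \ref{cornov18} (which already incorporates Theorem \ref{Keytheorem}): for an appropriate choice of the free index ``$\ell$'' appearing there, the hypothesis $\theta\notin A_{\eta;n}$ (resp.\ $\theta\notin B_{\eta;n;\ell}$) is exactly what is needed to verify the lower bound on the relevant minimum of $|\sin\pi(\cdot)|$ that the Corollary requires.

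For (i), I would apply Corollary \ref{cornov18} with its parameter ``$\ell$'' taken to be $0$ and $s$ satisfying $n_{\theta;s}=n$, so that $\ell'=|0-n_{\theta;s}|=|n|$ and the minimum controlled by the Corollary runs over the window $\{x:\ |x|\le 2|n|\}$, centered at the phase $2\theta+2n\alpha$. Since $2|n|\le 10|n|$, this window is contained in the window $\{n':\ |n'|\le 10|n|\}$ that defines $A_{\eta;n}$ (and both minima have the same phase center), so $\theta\notin A_{\eta;n}$ forces
\[
  \min_{|x|\le 2|n|}|\sin\pi(2\theta+\alpha(2n+x))|\ \ge\ \min_{|n'|\le 10|n|}|\sin\pi(2\theta+\alpha(2n+n'))|\ >\ e^{-\eta|n|}=e^{-\eta|0-n_{\theta;s}|}.
\]
As $\eta\in(0,L-\eps)$, this is precisely the hypothesis of Corollary \ref{cornov18}, whose conclusion (valid once $|n|$ is large enough, depending on $\eps$) is $|\phi_{\theta;s}(0)|\le e^{-(L-\eta-\eps)|n|}|\phi_{\theta;s}(n)|$, i.e. (i).

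Statement (ii) would be handled identically, now taking the parameter ``$\ell$'' of Corollary \ref{cornov18} to be the fixed $\ell$ of the Lemma, so that $\ell'=|n-\ell|$; the window $\{|x|\le 2|n-\ell|\}$ lies inside the window $\{|n'|\le 10|n-\ell|\}$ of $B_{\eta;n;\ell}$ (again with phase center $2\theta+2n\alpha$), hence $\theta\notin B_{\eta;n;\ell}$ gives $\min_{|x|\le 2|n-\ell|}|\sin\pi(2\theta+\alpha(2n+x))|>e^{-\eta|n-\ell|}$, and Corollary \ref{cornov18} returns $|\phi_{\theta;s}(\ell)|\le e^{-(L-\eta-\eps)|n-\ell|}|\phi_{\theta;s}(n)|$ for large $|n-\ell|$. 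I do not expect any real obstacle here: the only point requiring care is the bookkeeping of phase centers and radii — the sets $A_{\eta;n}$, $B_{\eta;n;\ell}$ are deliberately defined with windows of radius $10|n|$, resp.\ $10|n-\ell|$, comfortably larger than the radius $2\ell'$ that Theorem \ref{Keytheorem} actually uses, and this slack is exactly what later keeps the measure bounds $|A_{\eta;n}|\le (20|n|+1)e^{-\eta|n|}$, $|B_{\eta;n;\ell}|\le(20|n-\ell|+1)e^{-\eta|n-\ell|}$ transparent. The heavy lifting has already been done inside Theorem \ref{Keytheorem}/Corollary \ref{cornov18}; this Lemma is just its transcription into the language of the sets that organize the measure estimates in the proof of Theorem \ref{Maintheoremupper}.
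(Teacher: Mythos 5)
Your proposal is correct and matches the paper's own (essentially one-line) derivation: the Lemma is stated there as an immediate consequence of Theorem \ref{Keytheorem} and Corollary \ref{cornov18}, exactly via the window-containment and phase-center bookkeeping you spell out. No issues.
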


\begin{proof}[\bf Proof of Theorem \ref{Maintheoremupper}]
Let $\delta_0$ be a small positive constant.  We write
\begin{eqnarray*}
  \sum_{n\in\Z} \int_0^{1}
   \sum_{n_{\theta;s}=n} |\phi_{\theta;s}(0) \phi_{\theta;s}(\ell)| d\theta &=&\sum_{(1-\delta_0)\ell}^{+\infty}+ \sum_{-\infty}^{\delta_0\ell}+\sum_{n=\delta_0\ell}^{(1-\delta_0)\ell} \\
 &=& {\rm I}+{\rm II}+{\rm III}.
\end{eqnarray*}
We estimate I first. In this case, fix $n_{\theta;s}=n\geq  (1-\delta_0)\ell$.
By (i) of   Lemma \ref{lem:ABeta} and \eqref{Gauth2}, we can conclude that for any $n\geq (1-\delta_0)\ell$
 and $\theta\notin A_{\eta;n}$,
 \begin{eqnarray*}
    \sum_{n_{\theta;s}=n}|\phi_{\theta;s}(0)\phi_{\theta;s}(\ell)| &\leq&\sum_{n_{\theta;s}=n}|\phi_{\theta;s}(0)\phi_{\theta;s}(n)| \\
   &\leq& e^{-(L-\eta-\eps)n}\sum_{n_{\theta;s}=n}
   |\phi_{\theta;s}(n)|^2 \\
    &\leq&  e^{-(L-\eta-\eps)n}.
 \end{eqnarray*}
 Therefore, we have that for $t=e^{\eta n} e^{-(L-\eps) n}$  and  $\eta\in (0,L-2\eps)$,
  \begin{equation}\label{Glnew1}
  \{\theta\in\T:\quad \sum_{n_{\theta;s}=n}
 |\phi_{\theta;s}(0)\phi_{\theta;s}(\ell)|> t\}
 \subseteq A_{\eta;n}.
  \end{equation}
 Let $t_1=e^{-\eps n}$, $t_2=e^{-(L-2\eps) n}$. Define $\eta(t)$ for  $t_2\leq t\leq 1$    implicitly by
 $t=e^{\eta(t) n} \cdot e^{-(L-\eps) n}$.
 Then for $t_2\leq t\leq 1$,  $\eta(t)\geq \eps$, and we have

 \begin{equation}\label{Gnov191}
     |A_{\eta(t);n}|\leq (20n+1)e^{-(L-\eps) n}/t.
 \end{equation}
Since $\sum_{n_{\theta;s}=n} |\phi_{\theta;s}(0) \phi_{\theta;s}(\ell)|\leq 1$, for any Borel $\Omega\in \T$,
we have
\begin{equation}\label{eq:layercake}
 \int_{\Omega} \sum_{n_{\theta;s}=n} |\phi_{\theta;s}(0)
  \phi_{\theta;s}(\ell)| d\theta =
   \int_{[0,1]}|\{\theta\in{\Omega}:\quad \sum_{n_{\theta;s}=n} |\phi_{\theta;s}(0)
    \phi_{\theta;s}(\ell)|> t\}| dt.
\end{equation}
Thus we have
 \begin{eqnarray}
   \int_0^{1} \sum_{n_{\theta;s}=n} |\phi_{\theta;s}(0)
   \phi_{\theta;s}(\ell)| d\theta &=& \int_0^{t_2} +\int_{t_2}^{t_1}+\int_{t_1}^1\nonumber\\
    &=& i+ii+iii .\label{Gnov192}
 \end{eqnarray}
Then
 \begin{equation}\label{Glnew2}
    i\leq t_2\leq e^{-(L-2\eps) n}.
 \end{equation}
 From \eqref{eq:layercake}, \eqref{Glnew1} and \eqref{Gnov191},
one
 has for large $n,$
 \begin{eqnarray}
   ii &\leq& \int_{t_2}^{t_1} |A_{\eta(t);n}| dt\nonumber \\
    &\leq& \int_{t_2}^{t_1}(20|n|+1)e^{-(\ln\lambda-\eps) n}/t dt \nonumber\\
    &\leq& e^{-(L-2\eps) n}.\label{Glnew3}
 \end{eqnarray}
Noticing that
 $|A_{\eta(t_1);n}|\leq (20|n|+1)e^{-(\ln\lambda-2\eps) n}$, one has
 \begin{eqnarray}
   iii &\leq& (1-t_1) |A_{\eta(t_1);n}|\nonumber\\
   &\leq&  e^{-(L-3\eps) n}.\label{Glnew4}
 \end{eqnarray}
 Thus, for $n\geq (1-\delta_0)\ell$,
\begin{equation}
  \int_0^{1} \sum_{n_{\theta;s}=n} |\phi_{\theta;s}(0)
   \phi_{\theta;s}(\ell)| d\theta \leq e^{-(L- 3 \eps)n}.
 \end{equation}
  Then, we have that
 \begin{equation}
       {\rm  I}  =\sum_{n=(1-\delta_0)\ell}^{\infty}
   \int_0^{1} \sum_{n_{\theta;s}=n} |\phi_{\theta;s}(0)
    \psi_{\theta;s}(\ell)| d\theta
     \leq  e^{-(L-4\eps) (1-\delta_0) \ell}.
 \end{equation}
  Similarly,
  \begin{equation}
      {\rm  II}
     \leq  e^{-(L-4\eps) (1-\delta_0) \ell}.
 \end{equation}
 Now we  are in a position to  estimate  III.
 For $\theta\in[0,1]\setminus A_{\delta_0;n}\cup B_{\delta_0;n;\ell}$, by Lemma \ref{lem:ABeta} and \eqref{Gauth2}, one has
 \begin{eqnarray*}
    \sum_{n_{\theta;s}=n}|\phi_{\theta;s}(0)\phi_{\theta;s}(\ell)| &\leq&  e^{-(L-\delta_0-\eps) \ell}   \sum_{n_{\theta;s}=n} |\phi_{\theta;s}(n)|^2 \\
    &\leq& e^{-(L-\delta_0-\eps) \ell}.
 \end{eqnarray*}
 It leads to
 \begin{equation}\label{equ8}
 \sum_{\delta_0\ell\leq n\leq (1-\delta_0)\ell} \int_{[0,1]\backslash ( A_{\delta_0;n}\cup B_{\delta_0;n;\ell})} \sum_{n_{\theta;s}=n} |\phi_{\theta;s}(0)
   \phi_{\theta;s}(\ell)| d\theta\leq  e^{-(L-\delta_0-2\eps) \ell}.
\end{equation}
 For  $\theta\in A_{\delta_0;n}\cup B_{\delta_0;n;\ell}$, let
   $x_0(\theta)\in[-10\ell,10\ell]$ be such that
 \begin{equation}
  |\sin\pi(2\theta+\alpha  x_0)|=\min_{|x|\leq 10\ell}
  |\sin\pi(2\theta+\alpha  x)|.
 \end{equation}
 Notice that  $x_0$ is unique by the fact that $\alpha$ satisfies Diophantine condition.

 Let
 \begin{equation*}
    \Omega_1=\{\theta\in A_{\delta_0;n}\cup B_{\delta_0;n;\ell}| x_0(\theta)< n\},
 \end{equation*}
 and
 \begin{equation*}
    \Omega_2=\{\theta\in A_{\delta_0;n}\cup B_{\delta_0;n;\ell}| x_0(\theta)\geq n\}
 \end{equation*}
 By Theorem \ref{Keytheorem} and the fact that $\delta_0\ell\leq n\leq (1-\delta_0)\ell$,  for any $\theta\in \Omega_1$,
 \begin{equation*}
    |\phi_{\theta;s}(\ell)|\leq e^{-(L-\eps)|\ell-n|}|\phi_{\theta;s}(n)|,
 \end{equation*}
and
 for any $\theta\in \Omega_2$,
 \begin{equation*}
    |\phi_{\theta;s}(0)|\leq e^{-(L-\eps)| n|}|\phi_{\theta;s}(n)|.
 \end{equation*}

 For $\theta\in\Omega_1\setminus A_{\eta;n}$
with $\delta_0<\eta<\ln L-\eps$,  by  Lemma \ref{lem:ABeta}, we   have that
 \begin{eqnarray}
   \sum_{n_{\theta;s}=n} |\phi_{\theta;s}(0)\phi_{\theta;s}(\ell)| &\leq&  e^{-(L-\eps)|n-\ell|}
    e^{-(L-\eta-\eps) |n|} \sum_{n_{\theta;s}=n} |\phi_{\theta;s}(n)|^2 \nonumber\\
     &\leq&  e^{-(L-\eps)|n-\ell|}
     e^{-(L-\eta-\eps) |n|} \nonumber\\
  &\leq& e^{-(L-\eps) \ell}
      e^{\eta |n|}. \label{equ6}
 \end{eqnarray}

 A similar bound holds for $\theta\in  \Omega_2\setminus B_{\eta;n;\ell}$. That is,
 for $\theta\in  \Omega_2\setminus B_{\eta;n;\ell}$
 and $\delta_0<\eta<L-\eps$,
 \begin{equation}\label{equ7}
  \sum_{n_{\theta;s}=n} |\phi_{\theta;s}(0)\phi_{\theta;s}(\ell)|
     \leq e^{-(L-\eps) \ell}
      e^{\eta |n|}.
 \end{equation}
 By (\ref{equ6}), (\ref{equ7}), \eqref{eq:layercake} and \eqref{Gnov191},
 we then have  \eqref{Gnov192} with $\int_0^1$ replaced by
 $\int_{\Omega_1\cup\Omega_2}$ and also \eqref{Glnew2}, \eqref{Glnew3}, \eqref{Glnew4}.
 Thus we also have
 \begin{equation*}
 \int_{\Omega_1\cup\Omega_2}\sum_{n_{\theta;s}=n} |\phi_{\theta;s}(0)\phi_{\theta;s}(\ell)|d\theta \leq e^{-(L-\eps) \ell}.
 \end{equation*}
 It leads to
 \begin{equation}\label{Gnov195}
 \sum_{\delta_0\ell\leq n\leq (1-\delta_0)\ell}\int_{\Omega_1\cup\Omega_2}\sum_{n_{\theta;s}=n} |\phi_{\theta;s}(0)\phi_{\theta;s}(\ell)|d\theta \leq e^{-(L-2\eps) \ell}.
 \end{equation}
By  (\ref{equ8}) and \eqref{Gnov195}, we get the bound of II,
 \begin{equation*}
    {\rm II}\leq e^{-(L-\delta_0-3\eps) \ell}.
 \end{equation*}
 Putting the bounds of I, II and III together,
 we have
 \begin{equation*}
    \sum_{n\in\Z} \int_0^{1}
   \sum_{n_{\theta;s}=n} |\phi_{\theta;s}(0) \phi_{\theta;s}(\ell)| d\theta \leq  e^{-(L-\delta_0-6\eps) \ell}.
 \end{equation*}
 Letting  $\delta_0,\eps \to 0$, we obtain Theorem \ref{Maintheoremupper}.
\end{proof}
\section{The upper bound}
In this part we will prove the upper bound: $\gamma_+\leq L.$
\begin{theorem}\label{thm:lowerbdd}
 For any $\Gamma$ satisfying $ L<\Gamma\leq 2L$, we have for $n$ large enough
 \begin{equation}
 \ln \int_0^{1}  \sum_{s} |\phi_{\theta;s}(0) \phi_{\theta;s}( n)| d\theta
   \geq -\Gamma |n|.
 \end{equation}
\end{theorem}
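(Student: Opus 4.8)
The plan is to prove the stronger bound $\int_0^1 \sum_s |\phi_{\theta;s}(0)\phi_{\theta;s}( n)|\,d\theta \ge c\,|n|\,e^{-L|n|}$ for large $|n|$, from which the asserted inequality follows for every $\Gamma>L$ (in particular for $L<\Gamma\le 2L$), and which together with the lower bound $\gamma_-\ge L$ of the previous section completes the proof of Theorem~\ref{Maintheorem}. Since the sum over $s$ has nonnegative terms, it is enough to retain one eigenfunction for each $\theta$. For a.e.\ $\theta$, Anderson localization provides an eigenfunction $\phi_\theta:=\phi_{\theta;s^*(\theta)}$ whose leftmost maximum $n_{\theta;s^*(\theta)}$ lies within a bounded distance of $0$, and the $\ell^2$‑normalization \eqref{Gauth1} together with the Lyapunov decay of Theorem~\ref{Keytheorem} forces $|\phi_\theta(0)|\ge c_0>0$ outside a set of $\theta$ of small measure. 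Thus everything reduces to bounding $\int_0^1 |\phi_\theta(n)|\,d\theta$ from below.

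The soft half of this is a transfer matrix estimate: the uniform Lyapunov upper bound $\|M_k(E,\theta)\|\le e^{(L+\eps)k}$ for the $k$‑step transfer matrix (valid for all $\theta$ and all $E$ near the spectrum once $k$ is large), applied to $M_{|n|}(E(\theta),\theta)^{-1}$, already gives $|\phi_\theta(n)|^2+|\phi_\theta(n\pm 1)|^2\ge c_0^2 e^{-2(L+\eps)|n|}$. This by itself is insufficient: it does not rule out a node of $\phi_\theta$ exactly at $n$, and it misses the amplification $\sim\|2\theta+n\alpha\|^{-1}$ that makes the $\theta$‑integral blow up. For that I would use the reflective structure of the eigenfunctions behind Theorem~\ref{Keytheorem} and \cite{jitomirskaya2018universal}: when $\delta(\theta):=\|2\theta+n\alpha\|_{\R/\Z}$ is small one has $V_\theta(m)\approx V_\theta(n-m)$, so $H_\theta$ is an $O(\delta)$ perturbation of the operator with $2\theta+n\alpha\in\Z$, which commutes with the reflection $(Ru)(m)=u(n-m)$; hence $\phi_\theta$ develops a reflected peak near $n$ whose height relative to the main peak is governed by the competition between the tunneling $\sim e^{-L|n|}$ and the symmetry breaking $\sim\delta(\theta)$. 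The estimate I would extract — running the arguments behind Theorem~\ref{Keytheorem} so as to produce lower, rather than upper, bounds — is that for $\theta$ with $e^{-(\Gamma-L)|n|}\le\delta(\theta)\le\tfrac1{10}$ and outside an exceptional set of measure $\le e^{-c(\Gamma)|n|}$,
\[
|\phi_\theta(n)|\ \ge\ c_1\,\frac{e^{-(L+\eps)|n|}}{\delta(\theta)} .
\]
The lower restriction on $\delta(\theta)$, equivalent to $\Gamma\le 2L$, keeps us in the regime where the first level of the hierarchy governs $\phi_\theta(n)$ and the peak estimate above is the correct description.

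Granting this, the conclusion follows by integration. Since $\theta\mapsto\delta(\theta)$ is piecewise twice the distance to $\Z$, the level sets $\{\delta(\theta)\in[r,2r]\}$ have measure $\asymp r$ uniformly in $n$, so
\[
\int_{\{e^{-(\Gamma-L)|n|}\le\delta(\theta)\le 1/10\}}\frac{d\theta}{\delta(\theta)}\ \asymp\ \int_{e^{-(\Gamma-L)|n|}}^{1/10}\frac{dr}{r}\ \asymp\ (\Gamma-L)\,|n| ,
\]
while the exceptional set of measure $\le e^{-c(\Gamma)|n|}$ contributes at most $e^{-c(\Gamma)|n|}e^{(\Gamma-L)|n|}$ to this integral, which is negligible as soon as $c(\Gamma)>\Gamma-L$ (automatic for $\Gamma$ near $L$). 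Combining this with $|\phi_\theta(0)|\ge c_0$ off a set of small measure gives $\int_0^1|\phi_\theta(0)\phi_\theta(n)|\,d\theta\gtrsim |n|\,e^{-(L+\eps)|n|}$, hence $\ln\int_0^1\sum_s|\phi_{\theta;s}(0)\phi_{\theta;s}(n)|\,d\theta\ge -\Gamma|n|$ for all large $|n|$, as required.

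The main obstacle is the middle step. Localization estimates are by their nature \emph{upper} bounds on eigenfunctions; here one needs a robust \emph{lower} bound on the mass of $\phi_\theta$ at a single fixed site $n$, together with control of all the exponentially small bad sets of $\theta$ (where $\phi_\theta$ fails to be centered near $0$, where it has a node precisely at $n$, or where $\delta(\theta)$ is too small for the peak description to apply and deeper levels of the hierarchy would intervene). Transfer matrices alone only control $|\phi_\theta(n)|^2+|\phi_\theta(n+1)|^2$ and cannot see the $\|2\theta+n\alpha\|^{-1}$ amplification, so it is precisely at this point that the delicate eigenfunction estimates of \cite{jitomirskaya2018universal} are required.
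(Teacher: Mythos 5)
Your plan hinges on one unproved claim, and that claim is exactly where the whole difficulty of the theorem sits: the pointwise lower bound $|\phi_\theta(n)|\gtrsim e^{-(L+\eps)|n|}/\delta(\theta)$ at the \emph{prescribed} site $n$, for all $\theta$ with $e^{-(\Gamma-L)|n|}\le\delta(\theta)\le 1/10$ outside an exceptional set of measure $e^{-c(\Gamma)|n|}$ with $c(\Gamma)>\Gamma-L$. You cannot get this by ``running the arguments behind Theorem~\ref{Keytheorem} in reverse'': those arguments (block expansion/Lemma~\ref{Keylemmaapp}, Lemma~\ref{Keylemma1app}) are intrinsically one-sided upper bounds, and the lower bounds available from the hierarchical-structure machinery of \cite{jl1,jitomirskaya2018universal} control quantities of the form $|\phi(x)|^2+|\phi(x-1)|^2$ near resonance-induced local maxima, not $|\phi(x)|$ at a single fixed site; so the node problem you yourself flag for the transfer-matrix step reappears at precisely the point where you declare it solved. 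In addition, you would still need to (i) justify the existence of a single eigenfunction with $|\phi_\theta(0)|\ge c_0$ off a small set (this requires a Bessel-type counting argument, not just \eqref{Gauth1}--\eqref{Gauth2}), and (ii) prove the quantitative measure estimate for the exceptional set with an exponent strictly larger than $\Gamma-L$; neither is sketched. As written, the argument proves a (stronger) statement $\int\gtrsim |n|e^{-L|n|}$ only conditionally on an estimate that is not in the paper, not in the cited references in the form you need, and whose proof would be a substantial piece of work on its own.

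The paper's proof avoids any lower bound on individual eigenfunctions altogether, and you should compare your middle step with its mechanism. It restricts to the set $\Theta\subset\{\theta:\ e^{-2\Gamma|n|}\le|\sin\pi(2\theta+n\alpha)|\le e^{-\Gamma|n|}\}$ of measure $\ge\frac1{100}e^{-\Gamma|n|}$, i.e.\ it goes \emph{deeper} into the resonance than your window ($\delta\lesssim e^{-\Gamma|n|}$ with $\Gamma>L$, rather than $\delta\ge e^{-(\Gamma-L)|n|}$). There the potential is almost palindromic about $n/2$, and a Wronskian argument (Lemma~\ref{JS}, in the spirit of \cite{js}) shows that \emph{every} $\ell^2$ eigenfunction satisfies $|\phi_s(n)-\phi_s(0)|\le e^{-\frac12(\Gamma-L-\eps)|n|}\|\phi_s\|_\infty$; combined with Proposition~\ref{prop:large} and completeness $\sum_s|\phi_s(0)|^2=1$, this gives $\sum_s|\phi_s(0)\phi_s(n)|\ge\frac14$ uniformly on $\Theta$. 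Thus in the paper the exponential smallness of the expectation comes entirely from the measure of $\Theta$, while the integrand is of order one there -- no amplification factor $\delta(\theta)^{-1}$, no control of nodes, and no lower bound on any single eigenfunction is ever needed. Unless you can actually establish your peak lower bound (which would indeed yield a sharper conclusion than the theorem), your proposal does not constitute a proof.
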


Fix   $ L<\Gamma\leq 2L$ and large $n$.  Define
sets
\begin{equation}
 \Theta _1=\{\theta\in[0,1]:\quad e^{-2\Gamma |n|}\leq |\sin\pi (2\theta+n\alpha)| \leq e^{-\Gamma |n|}\}
\end{equation}
and
\begin{equation}
 \Theta _2=\{\theta\in[0,1]:\quad\text{ there exists some }  |k|\geq 1000|n| \text{ such that }|\sin\pi (2\theta+k\alpha)| \leq e^{- \frac{L}{100}|k|}\}.
\end{equation}
Then $\Theta=\Theta _1\backslash \Theta_2$
has measure satisfying $|\Theta| \geq \frac{1}{100} e^{-\Gamma |n|}$.
\begin{lemma}
Let $\alpha$ be  Diophantine  with  constants  $\kappa,\tau>0$. Then for any  $\theta\in\Theta$ and
for any $m>C(\kappa,\tau) |n|$,
\begin{equation}\label{Gnov196}
    \min_{|x|\leq m}|\sin\pi(2\theta+x\alpha)|\geq e^{- \frac{L}{100}|m|}.
\end{equation}

\end{lemma}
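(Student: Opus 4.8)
The plan is to obtain the lower bound on $|\sin\pi(2\theta+x\alpha)|$ uniformly over integers $x$ with $|x|\le m$ by splitting this range into a far part, $1000|n|\le|x|\le m$, and a near part, $|x|<1000|n|$, and handling the two by unrelated mechanisms. The only input from harmonic analysis will be the elementary inequality $|\sin\pi t|\ge 2\|t\|_{\R/\Z}$, used to pass between $|\sin\pi(\cdot)|$ and distance to the nearest integer.

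For $x$ in the far part the bound is immediate from $\theta\notin\Theta_2$: by definition $|\sin\pi(2\theta+k\alpha)|>e^{-\frac{L}{100}|k|}$ for every $|k|\ge1000|n|$, so taking $k=x$ and using $|x|\le m$ gives $|\sin\pi(2\theta+x\alpha)|>e^{-\frac{L}{100}|x|}\ge e^{-\frac{L}{100}m}$. No arithmetic properties of $\alpha$ enter here.

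For $x$ in the near part I would exploit that membership in $\Theta_1$ pins $2\theta+n\alpha$ super-exponentially close to an integer, namely $\|2\theta+n\alpha\|_{\R/\Z}\le\tfrac12|\sin\pi(2\theta+n\alpha)|\le\tfrac12 e^{-\Gamma|n|}$. Writing $2\theta+x\alpha=(2\theta+n\alpha)+(x-n)\alpha$ and applying the reverse triangle inequality on $\R/\Z$, $\|2\theta+x\alpha\|_{\R/\Z}\ge\|(x-n)\alpha\|_{\R/\Z}-\tfrac12 e^{-\Gamma|n|}$. If $x=n$ the desired estimate already follows from the lower threshold built into $\Theta_1$, namely $|\sin\pi(2\theta+n\alpha)|\ge e^{-2\Gamma|n|}$, provided $\tfrac{L}{100}m\ge 2\Gamma|n|$. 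If $x\ne n$ then $1\le|x-n|<1001|n|$, so the Diophantine hypothesis gives $\|(x-n)\alpha\|_{\R/\Z}\ge\tau/(1001|n|)^\kappa$, which is only polynomially small in $|n|$; since $\Gamma>L>0$, for $|n|$ large the error $\tfrac12 e^{-\Gamma|n|}$ is negligible against it, leaving $\|2\theta+x\alpha\|_{\R/\Z}\ge\tfrac12\tau/(1001|n|)^\kappa$, hence $|\sin\pi(2\theta+x\alpha)|\ge\tau/(1001|n|)^\kappa$.

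The one place needing care — and the reason for the hypothesis $m>C(\kappa,\tau)|n|$ — is the final comparison: I need $\tau/(1001|n|)^\kappa\ge e^{-\frac{L}{100}m}$ and $e^{-2\Gamma|n|}\ge e^{-\frac{L}{100}m}$, and since $m\ge C|n|$ both reduce to $\tfrac{L}{100}C|n|$ dominating a logarithm (resp. a linear term), which holds once $C$ is chosen large enough in terms of $\kappa,\tau$ (and $L,\Gamma$, e.g. $C\ge 200\Gamma/L$) and $|n|$ is large. Combining the far-part and near-part estimates then yields $\min_{|x|\le m}|\sin\pi(2\theta+x\alpha)|\ge e^{-\frac{L}{100}m}$ for every $\theta\in\Theta$. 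I do not expect a genuine obstacle here: the substance lies in the definitions of $\Theta_1$ and $\Theta_2$ chosen just before the lemma, and this argument is essentially the bookkeeping that verifies those two thresholds fit together.
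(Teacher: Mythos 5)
Your proposal is correct and follows essentially the same route as the paper: the paper's (very terse) proof splits into the cases $|x_0|\ge 1000|n|$ (handled by $\theta\notin\Theta_2$), $|x_0|\le 1000|n|$ with $x_0\ne n$ (handled by $\theta\in\Theta_1$ plus the Diophantine condition), and $x_0=n$ (handled by the lower threshold $e^{-2\Gamma|n|}$ in $\Theta_1$), which is exactly your far/near decomposition with the same three mechanisms. Your write-up merely supplies the bookkeeping (the $\|\cdot\|_{\R/\Z}$ comparison, the triangle inequality, and the choice of $C$, which is harmless since $\Gamma/L\le 2$) that the paper leaves implicit.
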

\begin{proof}
Let $x_0$ be such that the minimum in \eqref{Gnov196} is attained at
$x=x_0$. We  split our analysis into three cases depending on the
value of  $x_0$.

Case I. $|x_0|\geq  1000|n|$. Then the Lemma holds because of $\theta\notin \Theta_2$.

Case II. $|x_0|\leq 1000|n|$ and $x_0\neq n$. The Lemma holds because of $\theta\in \Theta_1$ and DC frequencies.

Case III. $x_0=n$. The Lemma holds because of $\theta\in \Theta_1$ (using $|\sin\pi (2\theta+n\alpha)|\geq e^{-2\Gamma |n|}$).
\end{proof}

It clearly suffices to show that for the eigenfunctions $\phi_s$  of
$H = H_{\lambda,\alpha,\theta}$  (we ignore the dependence on
$\theta$)  we have
\begin{equation}
 \sum_s |\phi_s(0) \phi_s(n)| \geq\frac{1}{2}
\end{equation}
as long as $|n|$ is large enough, uniformly in $\theta\in \Theta$.
The first step is

\begin{proposition}\label{prop:large}
 For $|n|$ large enough and $\theta\in \Theta$, we have
 \begin{equation}
  \sum_{|m|\leq C_{\star}|n|} \sum_{n_s = m} |\phi_s(0)|^2
    \geq \frac{1}{2},
 \end{equation}
 where $C_{\star}=C(\kappa,\tau)$.
\end{proposition}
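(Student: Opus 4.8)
The plan is to establish the complementary bound
$\sum_{|m|>C_{\star}|n|}\sum_{n_s=m}|\phi_s(0)|^2\le\tfrac12$,
from which the Proposition follows immediately: by the orthogonality relation \eqref{Gauth2} applied at the site $0$ one has $\sum_s|\phi_s(0)|^2=1$, and since each eigenfunction $\phi_s$ has a well-defined leftmost maximum position $n_s$, the sum splits as $\sum_s=\sum_{m\in\Z}\sum_{n_s=m}$. I would take $C_{\star}=C(\kappa,\tau)$, the constant furnished by the Lemma above.

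The first step is to show that for an eigenfunction with $n_s=m$ and $|m|>C_{\star}|n|$, the value $|\phi_s(0)|$ is exponentially small in $|m|$ relative to $\|\phi_s\|_{\ell^{\infty}}=|\phi_s(m)|$. I would apply Corollary \ref{cornov18} with $\ell=0$, so that $\ell^{\prime}=|m|$; its hypothesis requires a lower bound on $\min_{|x|\le 2|m|}|\sin\pi(2\theta+\alpha(2m+x))|$. Substituting $y=2m+x$ shows this minimum is bounded below by $\min_{|y|\le 4|m|}|\sin\pi(2\theta+\alpha y)|$, and since $4|m|>4C_{\star}|n|>C(\kappa,\tau)|n|$ and $\theta\in\Theta$, the Lemma above gives the lower bound $e^{-\frac{L}{25}|m|}$. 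Choosing, say, $\eta=\frac{L}{24}\in(0,L-\eps)$ makes the hypothesis of Corollary \ref{cornov18} hold once $|m|$ is large (hence once $|n|$ is large, since $|m|>C_{\star}|n|$), and one obtains $|\phi_s(0)|\le e^{-(L-\eta-\eps)|m|}|\phi_s(m)|$.

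Next I would square and sum over $s$. Since $2(L-\eta-\eps)\ge L$ for $\eps$ small, $|\phi_s(0)|^2\le e^{-L|m|}|\phi_s(m)|^2$; summing over the $s$ with $n_s=m$ and invoking $\sum_{n_s=m}|\phi_s(m)|^2\le\sum_s|\phi_s(m)|^2=1$ from \eqref{Gauth2} yields $\sum_{n_s=m}|\phi_s(0)|^2\le e^{-L|m|}$. Summing this geometric series over $|m|>C_{\star}|n|$ produces a bound of order $e^{-LC_{\star}|n|}$, which is $\le\tfrac12$ once $|n|$ is large enough. Subtracting from $\sum_s|\phi_s(0)|^2=1$ gives the claim.

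I do not anticipate a genuine obstacle here: all the analytic content is already packaged in Theorem \ref{Keytheorem}/Corollary \ref{cornov18} and in the Lemma above describing $\Theta$. The only points that need care are bookkeeping ones: matching the shift by $2m$ inside the sine in Corollary \ref{cornov18} to the unshifted quantity controlled by the Lemma (via the substitution $y=2m+x$ and the crude bound $|y|\le 4|m|$), checking that the decay rate $2(L-\eta-\eps)$ comfortably exceeds $L$, and making sure the phrase ``$|n|$ large enough'' in the statement absorbs the ``$\ell^{\prime}$ large depending on $\eps$'' requirement inherited from Theorem \ref{Keytheorem}.
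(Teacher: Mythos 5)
Your proposal is correct and follows essentially the same route as the paper: apply Corollary \ref{cornov18} with $\ell=0$ to eigenfunctions centered at $|m|>C_{\star}|n|$, using the lemma on $\Theta$ (via the substitution $y=2m+x$) to verify its sine hypothesis, then square, sum the resulting geometric series to get a bound $\leq\tfrac12$ on the far-centered contribution, and conclude by \eqref{Gauth2}. The paper's proof is the same argument, merely stating the intermediate decay as $|\phi_s(0)|\leq e^{-\frac{L}{2}|m|}|\phi_s(m)|$ rather than tracking an explicit $\eta$.
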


\begin{proof}
Without loss of generality, assume $n\geq0$.
Suppose $m\leq -C_{\star} n$ or $m\geq C_{\star}n$.

Using    Corollary  \ref{cornov18} with $n_{\theta;s}=m$, $\ell=0$, by \eqref{Gnov196},
we have
\begin{equation*}
   | \phi_s(0)|\leq  | \phi_s(m)| e^{-\frac{L}{2}|m|}.
\end{equation*}
Thus
\begin{eqnarray*}
  \sum_{|m|\geq C_{\star}n}  \sum_{n_s = m} |\phi_s(0)|^2&\leq &  \sum_{|m|\geq C_{\star}n} \sum_{n_s = m} | \phi_s(m)|^2 e^{-L|m|}\\
  &= &  \sum_{|m|\geq C_{\star}n} e^{-L|m|}  \sum_{n_s = m} | \phi_s(m)|^2\\
  &\leq & \sum_{|m|\geq C_{\star}n} e^{-L|m|}\\
  &\leq& \frac{1}{2}.
\end{eqnarray*}
Combining with (\ref{Gauth2}), the result follows.
\end{proof}
The following  lemma  is similar to a statement appearing in \cite{jitomirskaya2018universal} with some modifications. We present a proof in the Appendix.
\begin{lemma}\label{JS}
Suppose
\begin{equation}\label{gjs}
|\sin\pi (2\theta+n\alpha)| \leq e^{-\Gamma |n|}
\end{equation}
with $L<\Gamma\leq 2L$. Suppose    $\phi$ is an  $\ell^2$ solution of $H_{\lambda,\alpha,\theta}\phi=E\phi$. Then
\begin{equation}\label{Gnov201}
    |\phi(n)-\phi(0)|\leq e^{-\frac{1}{2}(\Gamma-L-\eps)|n|}||\phi||_{\ell^{\infty}(\Z)}.
\end{equation}

\end{lemma}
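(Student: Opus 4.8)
\textbf{Proof proposal for Lemma \ref{JS}.}

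The plan is to exploit the near-degeneracy condition \eqref{gjs} together with the transfer-matrix formalism for the almost Mathieu operator. Recall that if $\phi$ solves $H_{\lambda,\alpha,\theta}\phi=E\phi$, then the vector $\binom{\phi(n+1)}{\phi(n)}$ is obtained from $\binom{\phi(1)}{\phi(0)}$ by applying the one-step transfer matrices $A_k=\binom{E-2\lambda\cos2\pi(\theta+k\alpha)\ \ -1}{1\qquad\qquad 0}$. The key algebraic observation is that the product $A_{n-1}\cdots A_0$ can be related to the product $A_{-1}\cdots A_{-n}$ (or, equivalently, that the dynamics ``from $0$ to $n$'' mirrors the dynamics ``from $n$ back to $0$'') via the reflection symmetry $\theta\mapsto -\theta-n\alpha$ of the potential: one checks that $V_{\lambda,\alpha,\theta}(n-j)=V_{\lambda,\alpha,\theta'}(j)$ where $\theta'=-\theta-n\alpha$, so the condition \eqref{gjs} is exactly the statement that $\|2\theta+n\alpha\|_{\R/\Z}$ is exponentially small, i.e. $\theta'\approx -\theta$ modulo $\Z$, meaning the ``reversed'' potential is exponentially close to the original one. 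First I would make this comparison precise: write the difference of the two transfer-matrix products over a window of length $O(|n|)$ and bound it by the number of steps times the sup of the per-step perturbation, which is $O(|n|\cdot e^{-\Gamma|n|})$ since $|\cos2\pi(\theta+k\alpha)-\cos2\pi(\theta'+k\alpha)|\lesssim\|2\theta+n\alpha\|_{\R/\Z}$ uniformly in $k$.

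Next I would invoke the upper bound on transfer matrix norms at the Lyapunov rate: for energies in the spectrum, $\|A_{k+m}\cdots A_{k+1}\|\leq e^{(L+\eps)|m|}$ once $|m|$ is large, uniformly in the base point (this is a standard consequence of the uniform upper semicontinuity / subadditivity of $\frac1m\log\|M_m\|$, which holds for the almost Mathieu operator by \cite{bj} and Avila's global theory, and is of the type already used implicitly via \cite{jitomirskaya2018universal}). Combining the two ingredients gives that the $\ell^2$ (hence $\ell^\infty$-normalized) solution, when propagated, satisfies $|\phi(n)-\phi(0)|\lesssim |n|\,e^{-\Gamma|n|}\cdot e^{(L+\eps)|n|}\|\phi\|_{\ell^\infty}$, and absorbing the polynomial factor and adjusting $\eps$ yields the stated bound $e^{-\frac12(\Gamma-L-\eps)|n|}\|\phi\|_{\ell^\infty}$. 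The factor $\frac12$ in the exponent is the natural loss: one pays $L$ once to control the transfer matrix over a window of length $|n|$ on each side of the symmetry point $n/2$, so effectively the gain $\Gamma-L$ is realized over a window of length $|n|$ but split, or alternatively one sets up the comparison over the half-window $[0,n/2]$ reflected onto $[n/2,n]$, paying $e^{(L+\eps)n/2}$ against $e^{-\Gamma n}$ only partially — this bookkeeping is exactly where the $\frac12$ enters.

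The main obstacle, and the step requiring care, is the precise implementation of the reflection/comparison argument: one must choose the symmetry center correctly (near $n/2$, where the two propagation problems meet), control the accumulated error from iterating $O(|n|)$ slightly-perturbed matrices each of norm $\leq e^{L+\eps}$ — a Gronwall-type estimate $\|\prod B_k - \prod B'_k\|\leq \big(\prod\|B_k\|\big)\sum_k\|B_k-B'_k\|/\ldots$ that is routine but must be arranged so the total is $e^{(L+\eps)|n|}\cdot|n|\cdot e^{-\Gamma|n|}$ and not, say, $e^{2(L+\eps)|n|}$ — and finally translate the transfer-matrix closeness back into closeness of the scalar values $\phi(n)$ and $\phi(0)$, using that $\|\phi\|_{\ell^\infty}$ controls the initial and boundary data of the propagation on the relevant windows. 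Since this is the kind of lemma extracted from \cite{jitomirskaya2018universal}, I would expect the cleanest route is to mirror their setup, but the self-contained version sketched above via reflection symmetry plus Lyapunov upper bounds should suffice; I defer the detailed bookkeeping to the Appendix as the paper indicates.
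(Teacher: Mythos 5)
Your starting point is the same as the paper's: the reflection $\hat{\phi}(k)=\phi(n-k)$ solves the equation with potential $\hat{V}(k)=2\lambda\cos 2\pi(\theta+(n-k)\alpha)$, and \eqref{gjs} gives $|V(k)-\hat V(k)|\leq Ce^{-\Gamma|n|}$ uniformly in $k$. But there is a genuine gap at the central step: you pass from ``the two transfer-matrix products are close'' directly to ``$|\phi(n)-\phi(0)|\lesssim |n|e^{-\Gamma|n|}e^{(L+\eps)|n|}\|\phi\|_\infty$''. Closeness of the cocycles $T\approx\hat T$ does not by itself relate $\phi(n)$ to $\phi(0)$: $U(0)=TU(m)$ and $\hat U(0)=\hat T\hat U(m)$ only become comparable once you know that the two vectors being propagated, $U(m)$ and $\hat U(m)$, agree up to sign at a common point $m\approx n/2$, i.e.\ that the eigenfunction is approximately (anti)symmetric about the reflection center. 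Even for an exactly palindromic potential this uses that $\phi$ is the essentially unique decaying solution, and the quantitative version is exactly what the paper's Wronskian argument (following \cite{js}) supplies: the increments of $W(\phi,\hat\phi)$ are bounded by $|V-\hat V|\,|\phi\hat\phi|\leq CA^2e^{-\Gamma n}$, the Wronskian is anchored small at a far point $m=500n$ using the localization estimates of Lemma \ref{Keylemmaapp} (decay of $\phi$ away from its maximum, which your sketch never invokes), hence $|W|\leq A^2e^{-(\Gamma-\eps)n}$ on the whole window; at the midpoint $W$ factors into a product of two scalar quantities, and the resulting dichotomy (taking a square root of $A^2e^{-(\Gamma-\eps)n}$) gives $\|U(m)+\iota\hat U(m)\|\leq Ae^{-\frac12(\Gamma-\eps)n}$ for some $\iota\in\{-1,1\}$. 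That square root — not your window-splitting bookkeeping heuristic — is the true origin of the factor $\frac12$; the final bound then comes from propagating from $n/2$ to $0$ at cost $e^{(L+\eps)n/2}$, together with the cross term $\|T-\hat T\|\,\|\hat U(m)\|$, which again needs the localization bound $\|\hat U(m)\|\leq Ae^{-(L-\eps)m}$.

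Two further symptoms of the same problem: your claimed intermediate estimate $e^{-(\Gamma-L-\eps)|n|}$ is actually \emph{stronger} than the lemma, and the palindromic mechanism does not produce it, which is a sign the derivation was never closed; and the step you defer as ``routine bookkeeping'' (translating matrix closeness into closeness of the scalar values, with $\|\phi\|_{\ell^\infty}$ controlling boundary data) is precisely the non-routine heart of the proof. To repair the argument you would need to add the Wronskian almost-constancy, its anchoring via eigenfunction decay (Lemma \ref{Keylemmaapp}, from \cite{jitomirskaya2018universal}), and the midpoint dichotomy, at which point you have reproduced the paper's proof rather than an alternative one.
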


\begin{proof}[\bf Proof of Theorem~\ref{thm:lowerbdd}]
For large $n$, by Proposition \ref{prop:large} and Lemma \ref{JS}, one has for $\theta\in \Theta$,
 \begin{align*}
  \sum_{s} |\phi_s(0)  \phi_s(n)|
   &\geq \sum_{|m|\leq C_{\star}|n|} \sum_{n_s=m} |\phi_s(0) \phi_s(n)| \\
   &\geq \sum_{|m|\leq C_{\star}|n|} \sum_{n_s=m} |\phi_s(0)| (|\phi_s(0)| -  e^{-\frac{1}{2}(\Gamma-L-\eps)|n|}||\phi_s||_{\ell^{\infty}(\Z)}\\
   &\geq   \sum_{|m|\leq C_{\star}|n|} \sum_{n_s=m} |\phi_s(0)|^2-  e^{-\frac{1}{2}(\Gamma-L-\eps)|n|}\sum_{|m|\leq C_{\star}|n|} \sum_{n_s=m} |\phi_s(0)|(\sum_{|k|\leq C_{\star}|n|}|\phi_s(k)|^2)^{\frac{1}{2}}\\
   &\geq \frac{1}{2}  - 2 e^{-\frac{1}{2}(\Gamma-L-\eps)|n|}\sum_{|m|\leq C_{\star}|n|} \sum_{n_s=m} \sum_{|k|\leq C_{\star}|n|} |\phi_s(k)|^2\\
    &\geq \frac{1}{4}.
 \end{align*}
 Then
 \begin{eqnarray*}
   \int_0^{1}  \sum_{s} |\phi_s(0) \phi_s( n)| d\theta &\geq& \int_{\Theta}  \sum_{s} |\phi_s(0) \phi_s( n)| d\theta \\
    &\geq& \frac{e^{-\Gamma|n|}}{400}.
 \end{eqnarray*}
This implies Theorem \ref{thm:lowerbdd}.
\end{proof}
\appendix
\section{ Proof of Theorem  \ref{Keytheorem}}
By shifting the operator by $n_{\theta;s}$ units we can assume $n_{\theta;s}=0 $. Without loss of generality, we assume $\ell>n_{\theta;s}$.
Then in order to prove Theorem  \ref{Keytheorem}, it suffices to prove the following theorem.

\begin{theorem}\label{Keytheoremsimple}
 Let $\lambda>1$, $\alpha$ Diophantine,   $n_{\theta;s}=0 $, $\phi_s(0)=1$,
 $\ell\in\Z^{+}$.
 Let $x_0\in[-2\ell,2\ell]$ be such that
 \begin{equation}
  |\sin\pi(2\theta+\alpha  x_0)|=\min_{|x|\leq 2\ell}
  |\sin\pi(2\theta+\alpha  x)|.
 \end{equation}
Then the following statements  hold for  large $\ell $:

If  $x_0\in [-2\ell,0]$, then  
 \begin{equation}
 |\phi_s(\ell)|\leq
   e^{-(L -\eps)\ell}.
 \end{equation}

If
 for $\eta\in(0,L-\eps)$
 \begin{equation}\label{Gminnov18}
  \min_{|x|\leq 2\ell}
  |\sin\pi(2\theta+\alpha x)|>
    e^{-\eta  \ell},
 \end{equation}

and $x_0\in [0,2\ell]$, then
 \begin{equation}\label{Gminnov181}
 |\phi_s (\ell)|\leq
   e^{-(L -\eta-\eps) \ell}.
 \end{equation}

\end{theorem}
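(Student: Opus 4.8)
The plan is to run the block--resolvent (Poisson) expansion for the eigenfunction $\phi_s$ together with the determinant estimates that drive \cite{jitomirskaya2018universal}. For an interval $[a,b]$ with $y\in[a,b]$, provided $E$ is not a Dirichlet eigenvalue of $H_{[a,b]}$,
\begin{equation*}
\phi_s(y)=-G_{[a,b]}(y,a)\,\phi_s(a-1)-G_{[a,b]}(y,b)\,\phi_s(b+1),
\end{equation*}
with $G_{[a,b]}(y,a)=P_{[y+1,b]}/P_{[a,b]}$, $G_{[a,b]}(y,b)=P_{[a,y-1]}/P_{[a,b]}$, where $P_{[c,d]}=\det(H_{[c,d]}-E)$ and $P_\emptyset=1$. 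I would import two facts. First, the uniform bound $|P_{[c,d]}|\le e^{(L+\eps)(d-c+1)}$ for $d-c$ large, uniformly in $\theta$ and in $E$ near the spectrum (unique ergodicity of the rotation and $L(E)=L$ on $\Sigma$). Second, a lower bound on a reflection--symmetric determinant: $P_{[c-r,c+r]}(\theta)$ is a polynomial of degree $2r+1$ in $\cos2\pi(\theta+c\alpha)$, bounded above as above and with logarithmic $\theta$--average $\ge L(2r+1)$ (Jensen's formula applied to the determinant), so the argument of \cite{jitomirskaya2018universal} gives $|P_{[c-r,c+r]}(\theta)|\ge \delta\,e^{(L-\eps)(2r+1)}$ as long as the only resonance obstructing the interval at its scale — an $m$ whose reflection axis $m/2$ lies in $[c-r,c+r]$ at a scale commensurate with $r$ and with $|\sin\pi(2\theta+m\alpha)|$ small — has strength $\ge\delta$. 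Combining these with the display, a point $y$ interior to such an interval $I$ with $\min(y-a,\,b-y)\ge(1-\eps)|I|/2$ satisfies $|\phi_s(y)|\le 2\delta^{-1}e^{-(L-C\eps)\min(y-a,\,b-y)}\,\|\phi_s\|_\infty$; and since $\|\phi_s\|_\infty=\phi_s(0)=1$, the boundary values are harmless.

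Next I would use the Diophantine property in the form: if $|\sin\pi(2\theta+m\alpha)|<\rho$ and $|\sin\pi(2\theta+m'\alpha)|<\rho$ then $\|(m-m')\alpha\|<\rho$, hence $|m-m'|\gtrsim\rho^{-1/\kappa}$. Applied with $\rho=e^{-\eps\ell}$ this shows there is at most one $x\in[-2\ell,2\ell]$ with $|\sin\pi(2\theta+\alpha x)|\le e^{-\eps\ell}$, while every other value in that window is $\ge c\ell^{-\kappa}\ge e^{-\eps\ell}$. Thus, up to an error absorbed into $\eps$, the only resonance in the window that can obstruct an interval is the one located by $x_0$. In particular, for $x_0\in[-2\ell,0]$ the single relevant resonance lies to the left of $0$, hence of $\ell$: I would take $I=[a,b]$ with $a$ just to the right of $x_0$ (and $a\le 0$) and $b$ just above $2\ell$, using the small freedom in $a,b$ to steer the reflection centre of $I$ away from that one resonance, all competing resonances being $e^{-\eps\ell}$--weak. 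Then $\ell$ is interior to $I$ with both distances to $\partial I$ at least $(1-\eps)\ell$, $I$ is unobstructed ($\delta\ge e^{-\eps\ell}$), and the bound above yields $|\phi_s(\ell)|\le e^{-(L-\eps)\ell}$ after relabelling $\eps$. If no single interval simultaneously keeps $\ell$ deep inside and unobstructed, I would chain finitely many unobstructed intervals inward from $\partial I$ toward $\ell$, losing only $O(\eps\ell)$ in the exponent.

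For $x_0\in[0,2\ell]$ with $|\sin\pi(2\theta+\alpha x_0)|>e^{-\eta\ell}$, the one relevant resonance is now on the same side as $\ell$ and cannot be avoided by an interval keeping $\ell$ deep inside, but it is weak, with $\delta\ge e^{-\eta\ell}$, while every other resonance in the window is $e^{-\eps\ell}$--weak. Choosing $I$ around $\ell$ as before, obstructed now only by the resonance near $x_0$, the determinant lower bound degrades merely to $|P_I|\ge e^{-\eta\ell}e^{(L-\eps)|I|}$, whence $|G_I(\ell,\partial I)|\le e^{\eta\ell}e^{-(L-\eps)\ell}$ and $|\phi_s(\ell)|\le e^{-(L-\eta-\eps)\ell}$, which is the claim. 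I expect the main obstacle to be the second imported fact — the sharp lower bound on the symmetric determinant with the correct $\delta^{-1}$ dependence on the strength of the single obstructing resonance — together with the geometric bookkeeping (selecting an unobstructed interval with $\ell$ deep inside, and the chaining step) needed to convert it into the two displayed bounds with all the $\eps$'s and $\eta$ tracked; this is precisely the portion of \cite{jitomirskaya2018universal} that the appendix has to reproduce.
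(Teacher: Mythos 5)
Your proposal runs on the same engine as the paper's appendix: the block--resolvent expansion with uniform upper bounds on the box determinants and a lower bound on the reflection-symmetric determinant degraded by exactly one power of the strength of the unique obstructing resonance is precisely the content of the two statements the paper imports from \cite{jitomirskaya2018universal} (Lemma \ref{Keylemmaapp} and Lemma \ref{Keylemma1app}), and your case split (resonance on the far side of $0$ gives full Lyapunov decay; a same-side resonance of strength at least $e^{-\eta\ell}$ costs exactly $\eta$) mirrors the paper's. The packaging differs a little: the paper never bounds $\phi_s(\ell)$ in one step on a single large interval; in Case II it first applies Lemma \ref{Keylemma1app} at the resonance site to get $\|U^{\phi}(x_0)\|\le e^{-(L-t-\eps)x_0}$ with $t x_0=\eta\ell$, then transports this to $\ell$ at the full rate via Lemma \ref{Keylemmaapp} (with a trivial sub-case when $x_0\le(\frac{\eta}{L}+\eps)\ell$), whereas you use one interval around $\ell$ with $|P_I|\ge e^{-\eta\ell}e^{(L-\eps)|I|}$; the exponents come out the same, and the single power of $\delta$ you need is exactly what Lemma \ref{Keylemma1app} encodes, so this is a legitimate repackaging modulo the uniformity estimate you yourself flag as the main burden.

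Two concrete soft spots in your write-up, both living in the part the quoted lemmas absorb. First, in Case I the prescription ``take $a$ just to the right of $x_0$'' does not make $I$ unobstructed by your own criterion: the reflection axis of the strong resonance sits at $x_0/2$, which lies well inside $[a,b]$ whenever $x_0$ is not close to $0$, and in Case I there is no lower bound at all on that resonance's strength, so the degraded determinant bound gives nothing there. What you need is that the resonant window of $I$ (the integers $i+j$ with $i,j$ ranging over the sampling translates) misses $x_0$, e.g.\ $a$ just to the right of $x_0/2$ --- always compatible with keeping $\ell$ deep inside since $x_0\le 0\le \eps\ell$ --- or else the multi-scale chaining you mention as a fallback; this is exactly what the three-point structure of Lemma \ref{Keylemmaapp} (special points $0$, $x_0$, $y_3=2\ell$, with a loss of only $3\gamma k$) packages for you. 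Second, the resonant windows of the intervals you use inside $[0,2\ell]$ reach integers of size up to roughly $4\ell$, outside the range $[-2\ell,2\ell]$ controlled by the hypothesis on $x_0$, so your ``at most one strong resonance'' count via the Diophantine condition must be run on that larger window, and a strong resonance located there is not covered by either the Case I or the Case II hypothesis; again this is a technicality handled inside the proofs of Lemmas \ref{Keylemmaapp}--\ref{Keylemma1app} in \cite{jitomirskaya2018universal}, but as stated your bookkeeping only addresses $[-2\ell,2\ell]$.
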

Suppose $H_{\lambda,\alpha,\theta}\varphi=E\varphi$.  Let $ U^{\varphi}(y) =\left(\begin{array}{c}
                             \varphi(y)\\
                            \varphi({y-1})
                          \end{array}\right)
                      $.
                      It isa standard fact (e.g. (37) in \cite{jitomirskaya2018universal}) that for large $|k_1-k_2|$,
  \begin{equation}\label{G.new18app}
Ce^{-(L+\varepsilon)|k_1-k_2|}||U^{\varphi}(k_2)|| \leq  ||U^{\varphi}(k_1)||\leq  Ce^{(L+\varepsilon)|k_1-k_2|}||U^{\varphi}(k_2)||.
\end{equation}
\begin{lemma}\cite[Lemma 3.4]{jitomirskaya2018universal}\label{Keylemmaapp}
  Let
$r_{y}^{\varphi}=\max_{|\sigma|\leq 10 \gamma}|\varphi(y+\sigma k)|$. Suppose  $k_0\in[-2Ck,2Ck]$ is such that
 \begin{equation*}
  |\sin\pi(2\theta+\alpha  k_0)|=\min_{|x|\leq 2 Ck}
  |\sin\pi(2\theta+\alpha  x)|,
 \end{equation*}
 where $C\geq1$ is a   constant.
 Let $\gamma,\varepsilon$ be small positive constants.
Let $y_1=0, y_2=k_0, y_3\in[-2Ck,2Ck]$.
Assume $y$ lies in  $[y_i,y_j]$ (i.e., $y\in [y_i,y_j]$)with  $|y_i-y_j|\geq k$ and $y_s\notin [y_i,y_j]$, $s\neq i,j$.
 Suppose     $|y_i|,|y_j|\leq Ck$ and $|y-y_i|\geq 10\gamma k$, $|y-y_j|\geq 10\gamma k$.
  Then for large enough $k$,
 \begin{equation}\label{first}
  r^\varphi_y \leq\max\{ r_{y_i}^{\varphi}  \exp\{-(L- \varepsilon)(|y-y_i|-3\gamma k)\},
  r_{y_j}^{\varphi}\exp\{-(L- \varepsilon)(|y-y_j|-3\gamma k )\}\}.
 \end{equation}
\end{lemma}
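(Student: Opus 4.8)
Since Lemma~\ref{Keylemmaapp} is quoted verbatim from \cite[Lemma~3.4]{jitomirskaya2018universal}, a complete treatment only needs that citation; what follows is the shape of the proof, which is the standard template for eigenfunction localization estimates in this setting. The plan is to feed $\varphi$ into a block‑resolvent (Poisson) expansion on a sub‑interval $I=[a,b]$ of $(y_i,y_j)$ whose endpoints are pushed $O(\gamma k)$ inside of $y_i$ and $y_j$. This inward shift serves two purposes: it creates the boundary room needed to pass, via the two‑sided transfer‑matrix comparison \eqref{G.new18app}, between the $r^\varphi$‑quantities (maxima over $O(\gamma k)$‑blocks) and values $|\varphi(z)|$ at individual sites, at a cost $e^{O(\gamma k)}$; and, decisively, it arranges that $I$ \emph{excludes} the site $y_2=k_0$, which by hypothesis is either an endpoint of $[y_i,y_j]$ or lies outside $[y_i,y_j]$ entirely.

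For a site $z$ near $y$ one then has $\varphi(z)=-G_I(z,a)\varphi(a-1)-G_I(z,b)\varphi(b+1)$, valid as soon as $E\notin\sigma(H_I)$. By Cramer's rule the Green's function is a ratio of restriction determinants: writing $P_{[c,d]}(\theta):=\det((E-H_{\lambda,\alpha,\theta})|_{[c,d]})$, one has $|G_I(z,a)|=|P_{[z+1,b]}(\theta)|/|P_{[a,b]}(\theta)|$, and symmetrically for $G_I(z,b)$. The numerators are controlled by the uniform transfer‑matrix upper bound $|P_{[c,d]}(\theta)|\le\|A_{d-c+1}(\theta+c\alpha,E)\|\le e^{(L+\eps)(d-c+1)}$, valid for $d-c$ large and all $\theta$, which holds because $\tfrac1n\sup_\theta\ln\|A_n(\cdot,E)\|\to L(E)=L=\ln\lambda$ on the spectrum (by \cite{bj} together with subadditivity; this is the only place the exact value $\ln\lambda$ enters).

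The heart of the matter — and the step I expect to be the real obstacle — is the matching lower bound $|P_{[a,b]}(\theta)|\ge C^{-1}e^{(L-\eps)(b-a+1)}$ for the interval $[a,b]$ that avoids $k_0$. This uses the special cosine structure of the almost Mathieu potential: $\theta\mapsto\tfrac1m\ln|P_{[c,c+m-1]}(\theta)|$ is subharmonic with $\theta$‑mean tending to $L$, and it can dip substantially below $L$ only in logarithmically small neighborhoods of "resonant" phases — those for which $|\sin\pi(2\theta+x\alpha)|$ is exponentially small for some site $x$ in the window $|x|\le 2Ck$. Since $\alpha$ is Diophantine, two distinct such sites in $[-2Ck,2Ck]$ would force $\|(\text{their difference})\alpha\|_{\R/\Z}\le 2e^{-\eps k}<\tau(4Ck)^{-\kappa}$, impossible for large $k$; hence there is at most one resonant site in the window, and by hypothesis it is $k_0$, which $[a,b]$ avoids. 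Therefore $\tfrac1{|I|}\ln|P_{[a,b]}(\theta)|\ge L-\eps$ for large $k$, which in particular gives $E\notin\sigma(H_I)$. This is precisely the type of determinant/Green's‑function estimate worked out in detail in \cite{jitomirskaya2018universal} and its predecessors, and the proposal is to import it from there.

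Combining the three bounds, $|G_I(z,a)|\le Ce^{(L+\eps)(b-z)-(L-\eps)(b-a+1)}\le e^{-(L-\eps)|z-a|+o(k)}$, and likewise for $G_I(z,b)$. Substituting into the Poisson formula, maximizing over the block around $y$, bounding $|\varphi(a-1)|$ and $|\varphi(b+1)|$ by $r_{y_i}^\varphi$ and $r_{y_j}^\varphi$ via \eqref{G.new18app}, and absorbing all the $O(\gamma k)$ shifts and $o(k)$ slacks into the stated $3\gamma k$ and $\eps$, yields \eqref{first}. Everything except the determinant lower bound in the previous paragraph is routine resolvent bookkeeping, carried out carefully in \cite{jitomirskaya2018universal}.
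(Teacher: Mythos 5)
The paper itself gives no argument for this lemma: it is imported verbatim as \cite[Lemma 3.4]{jitomirskaya2018universal}, so your decision to lean on the citation is, formally, the same "proof" the paper offers, and your outer skeleton (block-resolvent/Poisson expansion, Green's functions as ratios of box determinants, uniform upper bound $|P_{[c,d]}|\leq e^{(L+\eps)(d-c+1)}$ from \cite{bj} plus subadditivity) is indeed the scheme used in the cited source. The problem is the mechanism you describe for the one step you yourself single out as decisive — the lower bound $|P_{[a,b]}(\theta)|\geq e^{(L-\eps)(b-a+1)}$ — which is not how that bound is obtained and, as stated, is false. For the almost Mathieu operator the box determinant is $P_{[a,b]}(\theta)=Q_{b-a+1}\bigl(\cos2\pi(\theta+\tfrac{a+b}{2}\alpha)\bigr)$, and its anomalous smallness is governed by resonances of the \emph{sampling cosines}, i.e.\ by pairs $j,j'$ with $2\theta+(j+j')\alpha$ close to $0$ mod $1$ (equivalently $j+j'\approx k_0$), together with the fact that a box around the localization center $y_1=0$ automatically has a nearly vanishing determinant because $E$ is its near-eigenvalue. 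It is \emph{not} governed by whether the lattice site $k_0$ belongs to $[a,b]$: a sub-box of $(0,k_0)$ nearly symmetric about $k_0/2$ is reflection-resonant when $|\sin\pi(2\theta+k_0\alpha)|$ is extremely small, contains no resonant site, and yet can have a determinant far below $e^{(L-\eps)|I|}$. So "push $I$ inside $(y_i,y_j)$ so that it excludes $k_0$" does not deliver the lower bound, and a one-shot expansion on such an interval would fail precisely in the configuration $[y_i,y_j]=[0,k_0]$, $y\approx k_0/2$, which the lemma must cover. Subharmonicity is also not a usable tool here: it yields no pointwise lower bounds.

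The genuine argument (in the cited paper and its predecessors) gets the lower bound from $\eps$-uniformity of the sampling points $\{\cos2\pi(\theta+j\alpha)\}$ over a union of candidate boxes around $y$ and a reference box around $y_1=0$, Lagrange interpolation, and the Herman/Chebyshev leading-coefficient bound $\sup_{[-1,1]}|Q_n|\gtrsim\lambda^n$, combined with the smallness of the determinant of the box at $0$ to force one of the candidate boxes around $y$ to have a large determinant; the Diophantine condition enters through the uniqueness of the resonant sum $j+j'\approx k_0$, and the boxes must be chosen adaptively (asymmetrically, e.g.\ one-sided relative to $k_0/2$) to dodge that resonance. Note that in your sketch the point $y_1=0$ plays no role at all, whereas in the real proof it is essential — that is a reliable sign the proposed mechanism is not the one doing the work. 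If you intend to rest on the citation, the sketch should either be dropped or corrected to reflect this interpolation/uniformity argument rather than the "no resonant site in the interval" heuristic.
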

\begin{lemma}\cite[Lemma 3.7]{jitomirskaya2018universal}\label{Keylemma1app}
Fix  $0<t<L$. 
Suppose
 \begin{equation}\label{ksmallG}
  |\sin\pi(2\theta+\alpha  k)|= e^{-t |k|}.
 \end{equation}
Then  for  large  $|k|$
 \begin{equation}\label{Equ9}
 ||U^{\varphi}(k)||\leq \max\{ ||U^{\varphi}(0)||, ||U^{\varphi}(2k)|| \} e^{-(L -t-\varepsilon) |k|}.
 \end{equation}
\end{lemma}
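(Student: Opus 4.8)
The plan is to read the hypothesis \eqref{ksmallG} as an approximate reflection symmetry of the potential about the point $k/2$, and then to run a Green's function estimate on the block centred at the resonant site $k$, with the reflection used only to control the size of the relevant determinant. Throughout I would take $k>0$; the case $k<0$ follows by replacing $\varphi(\cdot)$ by $\varphi(-\cdot)$ and $\theta$ by $-\theta$.

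\medskip
\noindent\textbf{Step 1 (the symmetry).} From $\cos X-\cos Y=-2\sin\frac{X+Y}{2}\sin\frac{X-Y}{2}$ one gets, for all $j$,
\[
 \bigl|V_{\lambda,\alpha,\theta}(j)-V_{\lambda,\alpha,\theta}(k-j)\bigr|
 =4\lambda\,\bigl|\sin\pi(2\theta+k\alpha)\bigr|\,\bigl|\sin\pi(k-2j)\alpha\bigr|
 \le 4\lambda\,e^{-t|k|}.
\]
Consequently the reflected function $\varphi^{\sharp}(j):=\varphi(k-j)$ solves the eigenvalue equation for the potential $V_{\lambda,\alpha,\theta}(k-\cdot)$, i.e. $(H_{\lambda,\alpha,\theta}-E)\varphi^{\sharp}=W\varphi^{\sharp}$ with $\|W\|_{\infty}\le 4\lambda e^{-t|k|}$; equivalently, the flip $j\mapsto k-j$ conjugates the Dirichlet restriction $H_{I}$ of $H_{\lambda,\alpha,\theta}$ to $I:=[1,2k-1]$ into the restriction, to $[-(k-1),k-1]$, of an operator whose potential is a $4\lambda e^{-t|k|}$-perturbation of $V_{\lambda,\alpha,\theta}$.

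\medskip
\noindent\textbf{Step 2 (Poisson on the resonant block).} Since $\varphi$ is an exact solution, for $m\in I$,
\[
 \varphi(m)=-G_{I}(m,1)\,\varphi(0)-G_{I}(m,2k-1)\,\varphi(2k),
\]
and by Cramer's rule $G_{I}(m,1)$ and $G_{I}(m,2k-1)$ are quotients of determinants of sub-blocks of $I$ by $\det(H_{I}-E)$. For $m=k$ and $m=k-1$ the relevant sub-blocks have length $\le k-1$, so their determinants are at most $e^{(L+\varepsilon)(k-1)}$ by the uniform upper bound on transfer-matrix norms valid for Diophantine $\alpha$ and large $k$ (the input behind \eqref{G.new18app}). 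If one also has
\[
 \bigl|\det(H_{I}-E)\bigr|\ \ge\ e^{(2L-t-\varepsilon)|k|},
\]
then $|G_{I}(k,\cdot)|,|G_{I}(k-1,\cdot)|\le e^{-(L-t-2\varepsilon)|k|}$, whence $|\varphi(k)|,|\varphi(k-1)|\le e^{-(L-t-2\varepsilon)|k|}\bigl(|\varphi(0)|+|\varphi(2k)|\bigr)$ and therefore $\|U^{\varphi}(k)\|\le e^{-(L-t-3\varepsilon)|k|}\max\{\|U^{\varphi}(0)\|,\|U^{\varphi}(2k)\|\}$, which is the assertion after renaming $\varepsilon$.

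\medskip
\noindent\textbf{The main obstacle.} Everything above is bookkeeping except the lower bound $|\det(H_{I}-E)|\ge e^{(2L-t-\varepsilon)|k|}$ — equivalently, the statement that the resonance induced at the site $k$ is no stronger than $|\sin\pi(2\theta+k\alpha)|=e^{-t|k|}$ says it is. The matching upper bound on the resonance strength is immediate from Step 1. For the lower bound one uses that, after the flip, $H_{I}$ is an $e^{-t|k|}$-perturbation of the symmetric block $H_{[-(k-1),k-1]}$, together with (a) the uniform control of transfer-matrix norms, which pins the ``generic'' size $e^{(2L-\varepsilon)|k|}$ of the determinant together with the separation of its remaining factors, and (b) the Diophantine lower bound $|\sin\pi(k-2j)\alpha|\ge \tau/|k|^{\kappa}$ for $j$ near $k$, which prevents the diagonal entry of $W$ responsible for the resonance from being atypically small, so that the only loss in $\det(H_{I}-E)$ is the single factor of size $|\sin\pi(2\theta+k\alpha)|$. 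This is precisely the kind of delicate determinant/transfer-matrix estimate — with the exceptional (resonant) set of phases controlled through the Diophantine condition — supplied by \cite{jitomirskaya2018universal}; once it is in place, the sharp exponent $L-t$ (rather than $L/2$ or $L-Ct$) drops out of Step 2 automatically.
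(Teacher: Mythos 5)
Your Steps 1--2 are correct bookkeeping (the reflection identity and the Poisson/Cramer expansion on $I=[1,2k-1]$ are fine), but the proof has a genuine gap exactly where you flag ``the main obstacle'': the lower bound $|\det(H_I-E)|\geq e^{(2L-t-\eps)|k|}$ is the entire analytic content of the lemma, and neither of your points (a), (b) delivers it. Upper bounds on transfer-matrix norms (the input behind \eqref{G.new18app}) and uniform upper semicontinuity only bound determinants from \emph{above}; they say nothing that prevents $E$ from being an eigenvalue, or an almost-eigenvalue, of the Dirichlet block $H_I$, in which case $\det(H_I-E)$ can be arbitrarily small and your Green's function estimate collapses. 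Likewise, the Diophantine bound $|\sin\pi(k-2j)\alpha|\geq\tau/|k|^{\kappa}$ is a statement about $\alpha$ alone and controls the off-symmetry error $W$, not the distance of $E$ to the Dirichlet spectrum of the block; it cannot show that ``the only loss in $\det(H_I-E)$ is the single factor $|\sin\pi(2\theta+k\alpha)|$.'' Establishing that the resonance at $k$ costs no more than $e^{t|k|}$ requires the large-deviation/Lagrange-interpolation machinery for the determinants $P_n$ viewed as Chebyshev-type polynomials in $\cos2\pi(\theta+\cdot)$ (and, in the standard treatment, the freedom to vary the interval so that its endpoints are regular, rather than fixing $I=[1,2k-1]$). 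That is precisely the content of Lemma 3.7 of \cite{jitomirskaya2018universal}, so your argument, at its crucial step, defers to the very statement being proved.

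For comparison: the paper does not reprove this lemma at all; it is quoted directly as \cite[Lemma 3.7]{jitomirskaya2018universal}, with the appendix only reproducing the arguments needed for Theorem \ref{Keytheoremsimple} and Lemma \ref{JS}. So your outline is a reasonable reconstruction of the shape of the argument (reflection symmetry plus a block-resolvent estimate at the resonant site), but as written it is not a proof: the sharp determinant lower bound with exponent $2L-t$ must either be proved via the uniformity/interpolation estimates of \cite{jitomirskaya2018universal} or imported from there explicitly.
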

\begin{proof}[\bf Proof of Theorem \ref{Keytheoremsimple}]
We start with the proof of  Case I. Let $\varphi=\phi$, $\gamma=\varepsilon$, $k=\ell$, $C=1$, $k_0=x_0<0$ and $y_3=2\ell$ in Lemma \ref{Keylemmaapp}.
By Lemma \ref{Keylemmaapp}, one has $\ell\in[y_1,y_3]$ and $y_2<y_1$, so
\begin{equation}\label{Gapp1}
    r^\phi_{\ell}\leq e^{-(L-C\varepsilon)\ell}  r^\phi_{0}+e^{-(L-C\varepsilon)\ell}  r^\phi_{2\ell}\leq e^{-(L-C\varepsilon)\ell},
\end{equation}
since $|\phi(n)|\leq 1$ for all $n\in \Z$.
By \eqref{G.new18app} and \eqref{Gapp1}, we have
\begin{equation*}
|  \phi(\ell)|\leq e^{-(L-C\varepsilon)\ell}.
\end{equation*}
It finishes the proof  of Case I.

Now weturn to Case II.
Let $t$ be such that $tx_0=\eta\ell$.
Let $\varphi=\phi$, $\gamma=\varepsilon$, $k=\ell$, $C=1$, $k_0=x_0>0$ and $y_3=2\ell$ in Lemma \ref{Keylemmaapp}.
By  Lemma \ref{Keylemmaapp} and \eqref{G.new18app}, one has (as in the proof of Case I), one
has
\begin{equation}\label{Gapp3}
|  \phi(\ell)|\leq e^{-(L-\varepsilon)\ell}+e^{-(L-\varepsilon)|\ell-x_0|}||U^{\phi}(x_0)||.
\end{equation}


Suppose $x_0\geq (\frac{\eta}{L}+\varepsilon)\ell$. In this case, by the definition of $t$, one has $0<t<L $.
Let $k=x_0$ and $\varphi=\phi$ in Lemma  \ref{Keylemma1app}, one has
\begin{equation}\label{Gapp4}
  ||U^{\phi}(x_0)||\leq \max\{ ||U^{\phi}(0)||, ||U^{\phi}(2x_0)|| \} e^{-(L -t-\varepsilon) x_0}\leq e^{-(L -t-\varepsilon) x_0}.
\end{equation}
In this case,  \eqref{Gminnov181} follows from \eqref{Gapp3} and \eqref{Gapp4}.

Suppose $0\leq x_0\leq (\frac{\eta}{L}+\varepsilon)\ell$. In this case, \eqref{Gminnov181} follows from \eqref{Gapp3} directly since $||U^{\phi}(x_0)||\leq 2$.

\end{proof}
\section{ Proof of Lemma   \ref{JS}}
 \begin{proof}
Without loss of generality, we assume $n>0$.
Set  $A= ||\phi||_{\ell^{\infty}(\Z)}$.
We  let $ \hat{\phi}(k)= \phi(n-k)$,
 $V(k)=2\lambda\cos2\pi(\theta+k\alpha)$ and $\hat{V}(k)=2\lambda\cos2\pi(\theta+(n-k)\alpha)$.
 Then by the assumption \eqref{gjs}, one has for all $k\in\Z$,
 \begin{equation}\label{Eqpnew}
    |V(k)-\hat{V}(k)|\leq Ce^{-\Gamma n}.
 \end{equation}
 We also have
 \begin{equation}\label{Equnew}
 \phi(k+1)+\phi(k-1)+ V(k)\phi(k)=E\phi(k)
 \end{equation}
 and
  \begin{equation}\label{Equinew}
\hat{\phi}(k+1)+\hat{\phi}(k-1)+ \hat{V}(k)\tilde{\phi}(k)=E\hat{\phi}(k).
 \end{equation}
 Let $W(n)=W(f,g)=f(n+1)g(n)-f(n)g(n+1)$ be the Wronskian.
 Let
  \begin{equation*}
   \hat{U}(k)=\left(\begin{array}{cc}
                  \hat{\phi}(k) \\ \hat{\phi}(k-1)
                 \end{array}
   \right),
 \end{equation*}
 and
 \begin{equation*}
   {U}(k)=\left(\begin{array}{cc}
                  {\phi}(k) \\ {\phi}(k-1)
                 \end{array}
   \right).
 \end{equation*}


By a standard calculation using (\ref{Eqpnew}), (\ref{Equnew}),
(\ref{Equinew}) and palindromic arguments as in \cite{js}
\footnote{Palindromic argument of \cite{js} then yields  $||U(\frac{n}{2})||\leq e^{-(\Gamma-\eps)\frac{n}{2}}$ if $n$ is even and analogous statement if $n$ is odd. Here we want to gain a factor of $A^2$.},
we have,
\begin{eqnarray}
   |W(\phi,\hat{\phi})(k)-W(\phi,\hat{\phi})(k-1)| &\leq & |V(k)-\hat{V}(k)||\phi(k)\hat{\phi}(k)|  \nonumber \\
  &\leq &  C e^{-Ln}|\phi(k)\hat{\phi}(k)|\nonumber \\
  &\leq &  C A^2e^{-\Gamma n}\label{Gnov202} .  
\end{eqnarray}
 In Lemma \ref{Keylemmaapp}, let $k_0=n$ and $y_3= 1000n$, then by \eqref{first} one has
 \begin{equation}\label{Gnov203}
  |U(m-1)| , |U(m)|\leq e^{-\Gamma n} A,
 \end{equation}
 where $m=500n$.

By  \eqref{Gnov202} and \eqref{Gnov203},
we have
\begin{equation}\label{almostconstancyn}
       |W(\phi,\hat{\phi})(k)|
  \leq    A^2e^{-(\Gamma-\eps)n},
\end{equation}
for $|k|\leq 500n$.


Now we split $n$ into cases, depending on whether it is  odd or even.

Case 1. $n$ is even. Let $m=\frac{n}{2}$,
then
 \begin{equation*}
  U(m)=\left(\begin{array}{cc}
                   \phi(m) \\ \phi(m-1)
                 \end{array}
   \right);
   \hat{U}(m)=\left(\begin{array}{cc}
                   {\phi}( m) \\ {\phi}(m+1)
                 \end{array}
   \right).
 \end{equation*}
 Applying  (\ref{almostconstancyn}) with $k=m-1$, we have
 \begin{equation*}
    |\phi(m)||\phi(m+1)-\phi(m-1)|\leq  A^2e^{-(\Gamma-\eps)n}.
 \end{equation*}
 This implies
 \begin{equation}\label{Case11new}
    |\phi(m)|\leq   Ae^{-\frac{1}{2}(\Gamma-\eps)n},
 \end{equation}
 or
 \begin{equation}\label{Case12new}
    |\phi(m+1)-\phi(m-1)|\leq   Ae^{-\frac{1}{2}(\Gamma-\eps)n}.
 \end{equation}
 If (\ref{Case11new}) holds, by (\ref{Equnew}), we also have
 \begin{equation}\label{Case13new}
    |\phi(m+1)+\phi(m-1)|\leq  Ae^{-\frac{1}{2}(\Gamma-\eps)n}.
 \end{equation}
 Putting (\ref{Case11new})  and (\ref{Case13new})  together, we get
 \begin{equation}\label{Phi1new}
    ||U(m)+\hat{U}(m)||\leq Ae^{-\frac{1}{2}(\Gamma-\eps)n}.
 \end{equation}

   If (\ref{Case12new}) holds,  we have
   \begin{equation}\label{Phi2new}
    ||U(m)-\hat{U}(m)||\leq  Ae^{-\frac{1}{2}(\Gamma-\eps)n}.
 \end{equation}
 Thus  in case 1  there exists $\iota\in\{-1,1\}$  such that
 \begin{equation}\label{Gnov197}
    ||U(m)+\iota\hat{U}(m)||\leq   Ae^{-\frac{1}{2}(\Gamma-\eps)n}.
 \end{equation}
 In Lemma \ref{Keylemmaapp}, let $k_0=n$, $y_1=0$ and $y_3=m$, then by \eqref{G.new18app} one has,
 \begin{equation}\label{half}
   ||\hat{U}(m) ||\leq A e^{-(L-\varepsilon)m}.
  \end{equation}

  Let $T $ and $\hat{T}$ be the transfer matrices associated  with potentials  $V$ and $\hat{V}$, taking $U(m),\hat{U}(m)$ to $U(0),\hat{U}(0)$ correspondingly.
  By (\ref{Eqpnew}),
    the usual uniform upper semi-continuity and telescoping, one has
  \begin{equation*}
    ||T||, ||\hat{T}||\leq  e^{ (L +\varepsilon)m}.
  \end{equation*}
  and
  \begin{equation*}
    ||T-\hat{T}||\leq  e^{ (L-2\Gamma+\varepsilon)m}.
  \end{equation*}
  Then by \eqref{Gnov197}, we have
  \begin{eqnarray*}
   ||U(0)+\iota \hat{U}(0)||   &\leq &  || T||||U(m)+\iota\hat{U}(m)||+ ||T-\hat{T}|| ||\hat{U}(m) ||\\
      &\leq &  Ae^{ (L+\varepsilon)m}e^{-\frac{1}{2}(L-\eps) n} + Ae^{ (L-2\Gamma+\varepsilon)m}e^{-m(L-\eps) }.\\
       &\leq & Ae^{-\frac{1}{2}(\Gamma-L-\eps)n}.
  \end{eqnarray*}
  This completes  the proof for even $n$ due to  the definition of $U(0)$ and $ \hat{U}(0)$.

Case 2. $n$ is odd. Let $\tilde{m}=\frac{N-1}{2}$,
then
 \begin{equation*}
   U(\tilde{m}+1)=\left(\begin{array}{cc}
                   \phi(\tilde{m}+1) \\ \phi(\tilde{m})
                 \end{array}
   \right);
   \hat{U}(\tilde{m}+1)=\left(\begin{array}{cc}
                   \phi( \tilde{m}) \\ \phi(\tilde{m}+1)
                 \end{array}
   \right).
 \end{equation*}

 Combining with (\ref{almostconstancyn}), we have
 \begin{equation*}
    |\phi(\tilde{m})+\phi(\tilde{m}+1)||\phi(\tilde{m})-\phi(\tilde{m}+1)|\leq  A^2 e^{- (\Gamma-\eps) n}.
 \end{equation*}
 This implies
 \begin{equation*}
   | \phi(\tilde{m})+\phi(\tilde{m}+1)|\leq   A e^{- \frac{1}{2}(\Gamma-\eps) n},
 \end{equation*}
 or
 \begin{equation*}
    |\phi(\tilde{m}+1)-\phi(\tilde{m})|\leq  A e^{- \frac{1}{2}(\Gamma-\eps) n}.
 \end{equation*}
 Thus  in case 2,  there also exists $\iota\in\{-1,1\}$  such that
 \begin{equation*}
    ||U(\tilde{m}+1)+\iota\hat{U}(\tilde{m}+1)||\leq A e^{- \frac{1}{2}(\Gamma-\eps) n}.
 \end{equation*}
  The rest of the proof is the same as in case 1.
\end{proof}
 \section*{Acknowledgments}

  This research was
 supported by NSF DMS-1401204 and NSF DMS-1700314. S.J. and W.L.  are
 also grateful to the Isaac Newton
Institute for Mathematical Sciences, Cambridge, for its hospitality, supported by
EPSRC Grant Number EP/K032208/1, during the 2015 programme Periodic and Ergodic
Spectral Problems where an important progress on this work was made.


\footnotesize

\end{document}